%%
%% This is file `sample-sigconf.tex',
%% generated with the docstrip utility.
%%
%% The original source files were:
%%
%% samples.dtx  (with options: `sigconf')
%% 
%% IMPORTANT NOTICE:
%% 
%% For the copyright see the source file.
%% 
%% Any modified versions of this file must be renamed
%% with new filenames distinct from sample-sigconf.tex.
%% 
%% For distribution of the original source see the terms
%% for copying and modification in the file samples.dtx.
%% 
%% This generated file may be distributed as long as the
%% original source files, as listed above, are part of the
%% same distribution. (The sources need not necessarily be
%% in the same archive or directory.)
%%
%% The first command in your LaTeX source must be the \documentclass command.
\documentclass[sigconf]{acmart}
%% NOTE that a single column version is required for 
%% submission and peer review. This can be done by changing
%% the \doucmentclass[...]{acmart} in this template to 
%% \documentclass[manuscript,screen]{acmart}
%% 
%% To ensure 100% compatibility, please check the white list of
%% approved LaTeX packages to be used with the Master Article Template at
%% https://www.acm.org/publications/taps/whitelist-of-latex-packages 
%% before creating your document. The white list page provides 
%% information on how to submit additional LaTeX packages for 
%% review and adoption.
%% Fonts used in the template cannot be substituted; margin 
%% adjustments are not allowed.

%%
%% \BibTeX command to typeset BibTeX logo in the docs
\AtBeginDocument{%
  \providecommand\BibTeX{{%
    \normalfont B\kern-0.5em{\scshape i\kern-0.25em b}\kern-0.8em\TeX}}}

%% Rights management information.  This information is sent to you
%% when you complete the rights form.  These commands have SAMPLE
%% values in them; it is your responsibility as an author to replace
%% the commands and values with those provided to you when you
%% complete the rights form.

%
%\copyrightyear{2022}
%\acmYear{2022} 
%\setcopyright{acmlicensed }
%% "setcopyright" code will depend on how the rightsreview form is completed.
%\acmConference[WWW '22] {Proceedings of the ACM Web Conference 2022}{April 25--29, 2022}{Virtual Event, Lyon, France.}
%\acmBooktitle{Proceedings of the ACM Web Conference 2022 (WWW '22), April 25--29, 2022, Virtual Event, Lyon, France}
%\acmPrice{15.00}
%% Price code will depend on how ACM rightsreview form is completed.
%\acmISBN{978-1-4503-9096-5/22/04} 
%\acmDOI{10.1145/3485447.3512220}
\copyrightyear{2022}
\acmYear{2022}
\setcopyright{acmlicensed}\acmConference[WWW '22]{Proceedings of the ACM
	Web Conference 2022}{April 25--29, 2022}{Virtual Event, Lyon, France}
\acmBooktitle{Proceedings of the ACM Web Conference 2022 (WWW '22), April
	25--29, 2022, Virtual Event, Lyon, France}
\acmPrice{15.00}
\acmDOI{10.1145/3485447.3512220}
\acmISBN{978-1-4503-9096-5/22/04}

%%
%% Submission ID.
%% Use this when submitting an article to a sponsored event. You'll
%% receive a unique submission ID from the organizers
%% of the event, and this ID should be used as the parameter to this command.
\acmSubmissionID{w10fp0845}

%%
%% The majority of ACM publications use numbered citations and
%% references.  The command \citestyle{authoryear} switches to the
%% "author year" style.
%%
%% If you are preparing content for an event
%% sponsored by ACM SIGGRAPH, you must use the "author year" style of
%% citations and references.
%% Uncommenting
%% the next command will enable that style.
%%\citestyle{acmauthoryear}

%%
%% end of the preamble, start of the body of the document source.
\usepackage{grffile}
\usepackage{xcolor}
\usepackage{multirow}
\usepackage{array}
\usepackage{amsmath}
\usepackage{amsthm}
\usepackage{graphicx}
\usepackage{caption}
\usepackage{subcaption}
\usepackage[ruled, vlined, linesnumbered]{algorithm2e}
\SetKwInput{KwInput}{Input}  % Set the Input
\newtheorem{definition}{Definition}
\newtheorem{theorem}{Theorem}[section]
\newtheorem{corollary}{Corollary}[theorem]
\newtheorem{lemma}[theorem]{Lemma}
\newcommand{\norm}[1]{\left\lVert#1\right\rVert}
\newcommand{\real}{\mathbb{R}}
\newcommand{\para}[1]{\noindentparagraph{\textbf{\textup{#1}}}}
\newcommand{\eps}{\varepsilon}
\newcommand{\var}{\mathsf{Var}}
\newcommand{\E}{\mathbb{E}}
\begin{document}

%%
%% The "title" command has an optional parameter,
%% allowing the author to define a "short title" to be used in page headers.
\title{Compressive Sensing Approaches for Sparse Distribution Estimation Under Local Privacy}

%%
%% The "author" command and its associated commands are used to define
%% the authors and their affiliations.
%% Of note is the shared affiliation of the first two authors, and the
%% "authornote" and "authornotemark" commands
%% used to denote shared contribution to the research.
\author{Zhongzheng Xiong}
\email{zzxiong21@m.fudan.edu.cn}
\affiliation{%
	\institution{School of Data Science,}
	\institution{Fudan University}
	\city{Shanghai}
	\country{China}
}

\author{Jialin Sun}
\email{sunjl20@fudan.edu.cn}
\affiliation{%
	\institution{School of Data Science,}
	\institution{Fudan University}
	\city{Shanghai}
	\country{China}
}

\author{Xiaojun Mao}
\email{maoxj@sjtu.edu.cn}
\affiliation{
	\institution{School of Mathematical Sciences,}
	\institution{Shanghai Jiao Tong University}
	\city{Shanghai}
	\country{China}
}

\author{Jian Wang}
\email{jian\_wang@fudan.edu.cn}
\affiliation{%
	\institution{School of Data Science,}
	\institution{Fudan University}
	\city{Shanghai}
	\country{China}
}

\author{Ying Shan}
\email{yingsshan@tencent.com}
\affiliation{
	\institution{Tencent}
	\city{Shenzhen}
	\country{China}
}

\author{Zengfeng Huang}
\email{huangzf@fudan.edu.cn}
\authornote{Corresponding author}
\affiliation{%
	\institution{School of Data Science,}
	\institution{Fudan University}
	\city{Shanghai}
	\country{China}
}

%%
%% By default, the full list of authors will be used in the page
%% headers. Often, this list is too long, and will overlap
%% other information printed in the page headers. This command allows
%% the author to define a more concise list
%% of authors' names for this purpose.
%\renewcommand{\shortauthors}{Xiong, et al.}

%%
%% The abstract is a short summary of the work to be presented in the
%% article.
\begin{abstract}
 {Recent years, local differential privacy (LDP) has been adopted by many web service providers like Google \cite{erlingsson2014rappor}, Apple \cite{apple2017privacy} and Microsoft \cite{bolin2017telemetry} to collect and analyse users' data privately}. In this paper, we consider the problem of discrete distribution estimation under local differential privacy constraints. Distribution estimation is one of the most fundamental estimation problems, which is widely studied in both non-private and private settings. In the local model, private mechanisms with provably optimal sample complexity are known. However, they are optimal only in the worst-case sense; their sample complexity is proportional to the size of the entire universe, which could be huge in practice. In this paper, we consider sparse or approximately sparse (e.g.\ highly skewed) distribution, and show that the number of samples needed could be significantly reduced. This problem has been studied recently \cite{acharya2021estimating}, but they only consider strict sparse distributions and the high privacy regime. We propose new privatization mechanisms based on compressive sensing. Our methods work for approximately sparse distributions and medium privacy, and have optimal sample and communication complexity.
\end{abstract} 
%%
%% The code below is generated by the tool at http://dl.acm.org/ccs.cfm.
%% Please copy and paste the code instead of the example below.
%%
\begin{CCSXML}
	<ccs2012>
	<concept>
	<concept_id>10002978.10002991.10002995</concept_id>
	<concept_desc>Security and privacy~Privacy-preserving protocols</concept_desc>
	<concept_significance>500</concept_significance>
	</concept>
	<concept>
	<concept_id>10002950.10003648.10003662.10003667</concept_id>
	<concept_desc>Mathematics of computing~Density estimation</concept_desc>
	<concept_significance>500</concept_significance>
	</concept>
	</ccs2012>
\end{CCSXML}

\ccsdesc[500]{Security and privacy~Privacy-preserving protocols}
\ccsdesc[500]{Mathematics of computing~Density estimation}

%%
%% Keywords. The author(s) should pick words that accurately describe
%% the work being presented. Separate the keywords with commas.
\keywords{locally differential privacy, sparse distribution estimation, compressive sensing.}

%% A "teaser" image appears between the author and affiliation
%% information and the body of the document, and typically spans the
%% page.

%%
%% This command processes the author and affiliation and title
%% information and builds the first part of the formatted document.
\maketitle

\section{Introduction}
Discrete distribution estimation \cite{kamath2015learning, lehmann2006theory,kairouz2016discrete} from samples is a fundamental problem in statistical analysis. In the traditional statistical setting, the primary goal is to achieve best trade-off between sample complexity and estimation accuracy. In many modern data analytical applications, the raw data often contains sensitive information, e.g.\ medical data of patients, and it is prohibitive to release them without appropriate privatization. Differential privacy is one of the most popular and powerful definitions of privacy \cite{dwork2006calibrating}. Traditional centralized model assumes there is a trusted data collector. In this paper, we consider \textit{locally differential privacy} (LDP) \cite{warner1965randomized, kasiviswanathan2011can, beimel2008distributed}, where users privatize their data before releasing it so as to keep their personal data private even from data collectors. {Recently, LDP has been deployed in real world online platforms by several technology organizations including Google \cite{erlingsson2014rappor}, Apple \cite{apple2017privacy} and Microsoft \cite{bolin2017telemetry}. For example, Google deployed their RAPPOR system \cite{erlingsson2014rappor} in Chrome browser for analyzing web browsing behaviors of users in a privacy-preserving manner. LDP has become the standard privacy model for large-scale distributed applications and LDP algorithms are now being used by hundreds of millions of users daily.} 

We study the discrete distribution estimation problem under LDP constraints. The main theme in private distribution estimation is to optimize statistical and computational efficiency under privacy constraints. Given a privacy parameter, the goal to achieve best tradeoff between estimation error and sample complexity. In the local model, the communication cost and computation time are also important complexity parameters.  This problem has been widely studied in the local model recently \cite{warner1965randomized, duchi2013local, erlingsson2014rappor, kairouz2014extremal, wang2016mutual, pastore2016locally, kairouz2016discrete, acharya2018hadamard,ye2018optimal, bassily2019linear, acharya2018hadamard}.  Thus far, the worst-case sample complexity, i.e., the minimum number of samples needed to achieve a desired accuracy for the worst-case distribution, has been well-understood \cite{wang2016mutual, ye2018optimal, acharya2018hadamard}. However, worst-case behaviors are often not indicative of their performance in practice; real-world inputs often contain special structures that allow one to bypass such worst-case barriers.

In this paper, we consider sparse or approximately sparse distributions $p\in \real^k$, which are perhaps the most natural structured distributions. Let $k$ be the ambient dimensionality of the distribution. The goal in this setting is to achieve \emph{sublinear} (in $k$) sample complexity.
This problem has been studied in \cite{acharya2021estimating} very recently. Their method first applies one-bit Hadamard response from~\cite{acharya2018hadamard}, and then projects the final estimate to the set of sparse distributions; it is proved that this simple idea leads to sample complexity that only depends on the sparsity $s$. However, there are still several problems left unresolved. First, the theoretical results in \cite{acharya2021estimating} only hold for strictly sparse distributions, which is too restrictive for many applications. Second, they only consider the high privacy regime. A more subtle issue is that, from their algorithm and analyses, the number of samples needed is implicitly assumed to be larger than $k$. This is because one-bit HR needs to partition the samples into more than $k$ groups of the same size; otherwise the estimation procedure is not well-defined. Therefore, their technique cannot achieve \emph{sublinear} sample complexity even if the distribution is extremely sparse.  It is unclear to us whether their projection-based techniques can be modified to resolve all these problems.
In this paper, we take a different approach, which resolves the above issues in a unified way. Our contributions are summarized as follows.
\begin{enumerate}
	\item  We propose novel privatization schemes based on \emph{compressive sensing} (CS). Our new algorithms have optimal sample and {communication complexity} simultaneously for sparse distribution estimation. As far as we know, these are the first LDP schemes that achieve this; and this is the first work to apply CS techniques in LDP distribution estimation.
	\item Applying standard results in CS theory, our method is immediately applicable to estimating approximately sparse distributions.
	\item We also generalize our techniques to handle medium privacy regimes using ideas from model-based compressive sensing.
\end{enumerate}
Our main idea is to do privatization and dimensionality reduction simultaneously, and then perform distribution estimation in the lower dimensional space. This can reduce the sample complexity because the estimation error depends on the dimensionality of the distribution. The original distribution is then recovered from the low-dimensional one using tools from compressive sensing. We call this technique \emph{compressive privatization} (CP). 
\subsection{Problem Definition and Results}
We consider $k$-ary discrete distribution estimation. W.l.o.g., we assume the target distribution is defined on the universe $\mathcal{X} =[k]:= [1, 2,\cdots, k]$, which can be viewed as a $k$-dimensional vector $p\in \real^k$ with $\norm{p}_1=1$. Let $\Delta_k$ be the set of all $k$-ary discrete distributions. Given $n$ i.i.d.\ samples, $X_1, \cdots, X_n$, drawn from the unknown distribution $p$, the goal is to provide an estimator $\hat{p}$ such that $d(p,\hat{p})$ is minimized, where $d(,)$ is typically the $\ell_1$ or $\ell_2$ norm.

\para{Local privacy.} In the local model, each $X_i$ is held by a different user. Each user will only send a privatized version of their data to the central server, who will then produces the final estimate. A privatization mechanism is a randomized mapping $Q$ that maps $x \in \mathcal{X}$ to $y \in \mathcal{Y}$ with probability $Q(y|x)$ for some output set $\mathcal{Y}$. The mapping $Q$ is said to be $\varepsilon$-locally differential private (LDP) \cite{duchi2013local} if for all $x, x' \in \mathcal{X}$ and $y \in \mathcal{Y}$, we have
	$$\frac{Q(y|x)}{Q(y|x')} \leq e^\varepsilon.$$
%In this the paper, we will only focus on the \emph{high privacy regime}, i.e., $\varepsilon$ is a small constant (typically $\varepsilon\le 1$), which is considered as the most interesting regime. The case where $\varepsilon=\omega(1)$ is left for future work. 

\para{LDP distribution estimation.}
Let $Y=(Y_1, Y_2, \cdots Y_n)\in \mathcal{Y}^n$ be the privatized samples obtained by applying $Q$ on $X=(X_1, \cdots, X_n)$. Given privacy parameter $\eps$, the goal of LDP distribution estimation is to design an $\eps$-LDP mapping $Q$ and a corresponding estimator $\hat{p}: \mathcal{Y}^n \rightarrow \Delta_k$, such that $\E[ d(p,\hat{p})]$ is minimized. Given $\eps$ and $\alpha$, we are most interested in the number of samples needed (as a function of $\eps$ and $\alpha$) to assure $\eps$-LDP and $\E[ d(p,\hat{p})]\le \alpha$. %A slightly different definition is also used in the literature: In stead of expected error, it is required that $d(p, \hat{p}) \leq \alpha$ with probability at least $0.9$ \cite{acharya2018hadamard}. However, up to a constant factor, they are equivalent by Markov inequality.

\para{Sparsity.} A discrete distribution $p\in \triangle_k$ is called $s$-sparse if the number of non-zeros in $p$ is at most $s$. Let $[p]_s$ be the $s$-sparse vector that contains the top-$s$ entries of $p$. We say $p$ is approximately $(s,\lambda)$-sparse, if $\|p-[p]_s\|_1\le \lambda$.

\para{Our results.} For the high privacy regime, i.e., $\eps =O(1)$, existing studies \cite{warner1965randomized, kairouz2016discrete, acharya2018hadamard, erlingsson2014rappor, ye2018optimal, wang2016mutual} have achieved optimal sample complexity, which is $\Theta\left(\frac{k^2}{\alpha^2 \varepsilon^2}\right)$ for $\ell_1$ norm error and $\Theta\left(\frac{k}{\alpha^2 \epsilon^2}\right)$ for $\ell_2$ norm error. These worst-case optimal bounds have a dependence on $k$. %the size of the entire universe,  which could be huge in practice. %In this paper we propose a new LDP mechanism with sample complexity depends on the sparsity of the target distribution. 
Our result (informal) for the high privacy regime is summarized as follows; see Theorem~\ref{thm:RandomMain_onebit} for exact bounds and results for approximately sparse distributions. 
\begin{theorem}[Informal]
	For any $0< \eps < 1$ and $\alpha>0$, there is an $\eps$-LDP scheme $Q$, which produces an estimator $\hat{p}$ with error guarantee $d(p,\hat{p})\le \alpha$. If $p$ is $s$ sparse, then the sample complexity of $Q$ is $O\left(\frac{s^2\log (k/s)}{\eps^2\alpha^2}\right)$ for $\ell_1$ error and $O\left(\frac{s\log (k/s)}{\eps^2 \alpha^2}\right)$ for $\ell_2$ error.
\end{theorem}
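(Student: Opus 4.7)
The plan is to do privatization and linear dimensionality reduction simultaneously, then decode back to the sparse space via compressive-sensing recovery. I would fix a random sensing matrix $\Phi \in \real^{m\times k}$ (for concreteness, take i.i.d.\ Rademacher $\pm 1/\sqrt{m}$ entries). A standard compressive-sensing fact guarantees that with $m = O(s\log(k/s))$ rows, $\Phi$ satisfies the Restricted Isometry Property (RIP) on $2s$-sparse vectors with a fixed small constant, with high probability over the draw of $\Phi$. The matrix $\Phi$ is treated as public and shared by every party; every column of $\Phi$ is a unit vector in $\real^m$.

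Each user $i$, observing $X_i\in[k]$, would treat $v_i := \Phi e_{X_i}$ (the $X_i$-th column of $\Phi$) as an input to an $\eps$-LDP mean-estimation mechanism tailored to the unit $\ell_2$-ball in $\real^m$ (for instance a Duchi--Jordan--Wainwright-style mechanism), and release the output $Y_i$. The mapping $X_i \mapsto Y_i$ is $\eps$-LDP by construction. The server then forms $\hat{y} := \frac{1}{n}\sum_{i=1}^n Y_i$, which is unbiased for $\Phi p = \E[v_i]$; the minimax variance bound for $\ell_2$-ball LDP mean estimation in the high-privacy regime gives
\begin{equation*}
\E\,\norm{\hat{y}-\Phi p}_2^2 \;=\; O\!\left(\frac{m}{n\eps^2}\right).
\end{equation*}
The server then decodes $\hat{p}$ by basis-pursuit denoising
\begin{equation*}
\hat{p} \in \arg\min_{x\in\real^k}\; \norm{x}_1 \quad \text{subject to}\quad \norm{\Phi x - \hat{y}}_2 \le \eta,
\end{equation*}
where $\eta^2 = O(m/(n\eps^2))$ is a high-probability upper bound on $\norm{\hat{y}-\Phi p}_2^2$. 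The standard RIP-based recovery theorem then yields $\norm{\hat{p}-p}_2 = O(\eta)$, so $\E\norm{\hat{p}-p}_2 = O(\sqrt{m/(n\eps^2)})$. Setting this at most $\alpha$ and rearranging gives $n = O(s\log(k/s)/(\eps^2\alpha^2))$, the claimed $\ell_2$ sample complexity.

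For the $\ell_1$ bound, I would truncate $\hat{p}$ to its top-$s$ coordinates so that $\hat{p}-p$ is essentially $2s$-sparse, and then apply Cauchy--Schwarz: $\norm{\hat{p}-p}_1 \le \sqrt{2s}\,\norm{\hat{p}-p}_2$. Requiring this quantity to be at most $\alpha$ costs an extra factor of $s$ in $n$, yielding $O(s^2\log(k/s)/(\eps^2\alpha^2))$. Approximately sparse distributions will be absorbed by the standard RIP tail term proportional to $\norm{p-[p]_s}_1/\sqrt{s}$, so they need no separate treatment.

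The hard part will be producing an LDP step whose contribution to $\norm{\hat{y}-\Phi p}_2^2$ really is $O(m/(n\eps^2))$ rather than, say, the $O(m^2/(n\eps^2))$ that coordinate-wise Laplace noise under basic composition would give. Using a mechanism adapted to the $\ell_2$ geometry of the columns of $\Phi$ (equivalently, to the unit sphere in $\real^m$) is precisely what makes the reduced dimension $m$ enter the variance linearly rather than quadratically, and carefully lining up the normalizations between the privatization step and the compressive-sensing decoder so that the RIP constant does not blow up is the main piece of bookkeeping.
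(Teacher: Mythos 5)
Your proposal is correct, and it reaches the theorem by a genuinely different route than the paper. The paper never privatizes a real-valued vector: it bakes the sensing matrix directly into a categorical channel $Q:[k]\to[m]$ of randomized-response type, where the output symbol $y$ gets probability proportional to $e^{\eps}$ on a set $C_x\subseteq[m]$ determined by the $+1$ pattern of the $x$-th column of a Rademacher matrix $A$; each user thus sends a \emph{single symbol}, i.e.\ $O(\log m)$ bits. The server forms the empirical output distribution $\hat q$, applies an affine debiasing to rewrite it as noisy linear measurements $\frac{1}{\sqrt m}A(D'p)+e_1+e_2$, and then invokes the same RIP-based recovery and the same Cauchy--Schwarz step $\norm{p-\hat p}_1\le\sqrt{2s}\,\norm{p-\hat p}_2$ that you use. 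Your scheme instead composes the public sketch $x\mapsto \Phi e_x$ with a generic $\eps$-LDP $\ell_2$-ball mean estimator, which cleanly decouples compression from privatization and buys real simplifications: there is no analogue of the paper's bias term $e_1$ (which arises because the column sums $n_i$ are only approximately $m/2$), no privacy slack $\eps+2\beta$ that must be rescaled away, and no extra requirements $m=\Omega(\log(k/\delta)/\eps^2)$ and $m=\Omega(\log(k/\delta)/(c^2\alpha^2))$ beyond the RIP bound $m=O(s\log(k/s))$ --- conditions the paper needs precisely to control $\norm{e_1}_2$ and the privacy loss (its properties P1 and P2). What you give up is communication: a Duchi--Jordan--Wainwright-style release is an $m$-dimensional vector costing $\Omega(m)$ bits per user, versus the paper's $\log m$ bits, which the paper treats as a headline feature and which also lets it view Hadamard Response as the dense special case of the same channel. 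Two bookkeeping points to tighten in your write-up: first, a basis-pursuit constraint with a fixed radius $\eta$ set from the \emph{expected} squared error is violated with constant probability, so either inflate $\eta$ and convert via Markov, or use a recovery guarantee stated in terms of the realized noise norm (the form of Lemma~\ref{lemma:cs_recovery} that the paper imports from Cai and Zhang), which lets expectations pass through directly; second, the top-$s$ truncation in your $\ell_1$ step should note that truncation at most doubles the $\ell_2$ error, after which your argument coincides with the paper's Corollary~\ref{corollary:sample complexity}, including the approximately sparse case absorbed by the $\norm{p-[p]_s}_1/\sqrt{s}$ tail term.
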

We provide two different sample optimal privatization methods; the first one has one-bit communication cost and the other one is symmetric (i.e. all users perform the same privatization scheme) but at the cost of using logarithmic communication. Symmetric mechanisms could be beneficial in some distributed settings; it is proved that the communication cost overhead cannot be avoided \cite{acharya2019communication}.
Our CS-based technique can be extended to the medium privacy regime with $1 \le \eps \le \log s$ (see Section \ref{sec:median}). The result is summarized in the following theorem. See Theorem \ref{thm: rcp_medium_result} for exact bounds.
\begin{theorem}[Informal]
	For any $1 \le e^\eps, 2^b \le s$ and $\alpha > 0$, there is an $\eps$-LDP scheme $Q$, which produces an estimator $\hat{p}$ with error guarantee $d(p , \hat{p}) \leq \alpha$ with communication no more than $b$ bits. If $p$ is $s$ sparse, then the sample complexity of $Q$ is $O(\frac{s^2\log {k}/{s}}{\min\{e^\eps, 2^b\}\alpha^2})$ for $\ell_1$ error and $O(\frac{s\log {k}/{s}}{\min\{e^\eps, 2^b\}\alpha^2})$ for $\ell_2$ error. 
\end{theorem}
This result provides a characterization on the relationship between accuracy, privacy, and communication, which is nearly tight.  A tight and complete characterization for dense distribution estimation was obtained in \cite{chen2020breaking}. 
\subsection{Related work}
Differential privacy is the most widely adopted notion of privacy \cite{dwork2006calibrating}; a large body of literature exists (see e.g.\ \cite{dwork2014algorithmic} for a comprehensive survey). The local model has become quite popular recently \cite{warner1965randomized, kasiviswanathan2011can, beimel2008distributed}. The distribution estimation problem considered in this paper has been studied in \cite{warner1965randomized, duchi2013local, erlingsson2014rappor, kairouz2014extremal, wang2016mutual, pastore2016locally, kairouz2016discrete, tianhao2017ldp, acharya2018hadamard,ye2018optimal, bassily2019linear, chen2020breaking}.
Among them, \cite{wang2016mutual,tianhao2017ldp, ye2018optimal,acharya2018hadamard} have achieved worst-case optimal sample complexity over all privacy regime. \cite{chen2020breaking} provides a tight characterization on the trade-off between estimation accuracy and sample size under fixed  privacy and communication constraints.  Their results are tight for all privacy regimes, but have not considered sparsity. 
\citet{kairouz2016discrete} propose a heuristic called projected decoder, which empirically improves the utility for estimating skewed distributions. They also propose a method to deal with open alphabets, which also reduces the dimensionality of the original distribution first by using hash functions. However, hash functions are not invertible, so they use least squares to recover the original distribution, which has no theoretical guarantee on the estimation error even for sparse distributions. 
Recently, \cite{acharya2021estimating} studied the same problem as in this work. Their method combined one-bit Hadamard response with sparse projection onto the probability simplex. Their methods have provable theoretical guarantees but there are some technical limitations. {They also proposed a method, which combined sparse projection with RAPPOR \cite{erlingsson2014rappor}. It achieves optimal sample complexity, but the communication complexity of RAPPOR is $O(k)$ bits for each user, where $k$ is the domain size of the distribution. Recently, \cite{feldman2021lossless} lowered the communication complexity of RAPPOR to $O(\log k)$ by employing pseudo random generator. This result is still worse than ours, which only requires $1$ bit for each user.} The heavy hitter problem, which is closely related to distribution estimation, is also extensively studied in the local privacy model \cite{bassily2017practical, bun2018heavy, wang2019locally, acharya2019communication}.  \cite{wang2019sparse} studies $1$-sparse linear regression under LDP constraints.  Statistical mean estimation with sparse mean vector is also studied under local privacy, e.g. \cite{duchi2019lower, barnes2020fisher}.

\subsection{Preliminaries on Compressive Sensing}\label{sec:preliminary}
Let $x$ be an unknown $k$-dimensional vector. The goal of compressive sensing (CS) is to reconstruct $x$ from only a few linear measurements \cite{candes2005decoding, candes2006robust, donoho2006compressed}. To be precise, let $B \in \mathbb{R}^{m\times k}$ be the measurement matrix with $m \ll k$ and $e \in R^{m}$ be an unknown noise vector, given $y = Bx + e$, CS aims to recover a sparse approximation $\hat{x}$ of $x$ from $y$. This problem is ill-defined in general, but when $B$ satisfies some additional properties, it becomes possible \cite{candes2005decoding,donoho2006compressed}. In particular, the \emph{Restricted Isometry Property} (RIP) is widely used.
\begin{definition}[RIP]\label{def:RIP}
	The matrix $B$ satisfies $(s, \delta)$-RIP property if for every $s$-sparse vector $x$, 
	\begin{align*}
		(1-\delta)\norm{x}_2  \leq \norm{Bx}_2 \leq (1+\delta)\norm{x}_2.
	\end{align*}
\end{definition}
We will use the following results from \cite{cai2013sparse}
\begin{lemma}
	\label{lemma:cs_recovery}
	If $B$ satisfies $(2s, 1/\sqrt{2})$-RIP. Given $y=Bx+e$, there is a polynomial time algorithm, which outputs $\hat{x}$ that satisfies $\norm{x - \hat{x}}_2 \leq \frac{C}{\sqrt{s}}\|x-[x]_s\|_1+ D \|e\|_2$ for some constant $C,D$.
\end{lemma}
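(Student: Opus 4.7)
The plan is to prove the lemma by following the sharp RIP-based recovery analysis of Cai and Zhang. The recovery algorithm is the standard constrained basis pursuit: given $y = Bx + e$, take
\begin{equation*}
\hat{x} \in \arg\min_{z \in \real^k} \norm{z}_1 \ \ \text{subject to}\ \ \norm{Bz - y}_2 \leq \eta,
\end{equation*}
for any fixed $\eta \geq \norm{e}_2$. This is a convex program solvable in polynomial time (e.g.\ via interior point methods). Since $x$ itself is feasible, $\norm{\hat{x}}_1 \leq \norm{x}_1$, and $\norm{B(\hat{x} - x)}_2 \leq 2\eta$.

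The next step is to derive a cone-type condition on the error $h = \hat{x} - x$. Let $T_0$ index the top-$s$ coordinates of $x$ in magnitude, so $[x]_s = x_{T_0}$. A standard decomposition of $\norm{\hat{x}}_1 = \norm{x + h}_1$ into $T_0$ and $T_0^c$ parts, followed by the triangle inequality, yields
\begin{equation*}
\norm{h_{T_0^c}}_1 \ \leq\ \norm{h_{T_0}}_1 + 2\norm{x - [x]_s}_1.
\end{equation*}
This means $h$ lies in an $\ell_1$-cone around the $s$-sparse subspace, modulo the tail term.

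The heart of the argument, and the main technical obstacle, is to convert $(2s, 1/\sqrt{2})$-RIP into an $\ell_2$ bound on $h$. An elementary analysis using the parallelogram identity applied to $h_{T_0}$ and the best $s$-sparse approximation of $h_{T_0^c}$ gives only the weaker RIP threshold $\delta_{2s} < \sqrt{2}-1$. To reach the sharp $1/\sqrt{2}$ constant, I would invoke Cai–Zhang's sparse representation lemma: any vector in the $\ell_1$-ball of the appropriate scale can be written as a convex combination of $s$-sparse vectors supported in $T_0^c$. Applying this decomposition to $h_{T_0^c}$ and expanding $\langle B h_{T_0}, B h_{T_0^c}\rangle$ along the convex combination permits bounding the cross term using only $\delta_{2s}$. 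Combining this with the feasibility bound $\norm{Bh}_2 \leq 2\eta$ and the cone inequality produces
\begin{equation*}
\norm{h}_2 \ \leq\ \frac{C}{\sqrt{s}} \norm{x-[x]_s}_1 + D\norm{e}_2
\end{equation*}
with absolute constants $C, D$ depending only on how far $\delta_{2s}$ is below $1/\sqrt{2}$.

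I expect the convex-program setup, the cone inequality, and the final algebraic combination to be routine. The delicate step is the sparse representation trick, which is what buys the optimal RIP constant $1/\sqrt{2}$; any looser RIP analysis would force a worse threshold and thereby either weaken the lemma or demand a larger measurement matrix than the paper's downstream arguments tolerate.
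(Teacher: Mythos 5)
Your proposal is correct and follows essentially the same route as the paper, which establishes this lemma purely by citation to Cai--Zhang (2013): your sketch --- constrained basis pursuit with $\eta \geq \norm{e}_2$, the cone inequality $\norm{h_{T_0^c}}_1 \leq \norm{h_{T_0}}_1 + 2\norm{x-[x]_s}_1$, and the sparse-representation-of-a-polytope lemma to reach the sharp $\delta_{2s} < 1/\sqrt{2}$ threshold --- is precisely the argument of that cited source. The only caveat, inherited from the paper's loose statement rather than introduced by you, is that the constants degrade as $\delta_{2s} \to 1/\sqrt{2}$ (so the RIP bound should be strict, or bounded away from $1/\sqrt{2}$, as your closing remark correctly anticipates), and the constrained program presumes a known upper bound on $\norm{e}_2$, which you flag explicitly.
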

For the medium privacy regime, we will use the notion of hierarchical sparsity and a model-based RIP condition \cite{roth2016reliable, roth2018hierarchical}.
\begin{definition}[Hierarchical sparsity \cite{roth2018hierarchical}]
	\label{def:hsparsity}
	Let $x$ be a $k_1 k_2$-dimensional vector  consists of $k_1$ blocks, each of size $k_2$ (e.g., $x=[x^{(1)},\cdots,x^{(k_1)}]$). Then $x$ is $(s, \sigma)$-hierarchically sparse if at most $s$ blocks have non-zero entries and each of these blocks is $\sigma$-sparse.
\end{definition}
\begin{definition}[HiRIP \cite{roth2018hierarchical}]
	\label{def:hirip}
	A matrix $A \in \mathbb{R}^{m \times k}$, where $k = k_1\times k_2$, is $(s,\sigma)$-HiRIP with constant $\delta$ if for all $(s, \sigma)$-hierarchically sparse vectors $x \in \mathbb{R}^{k_1k_2}$, we have
	\begin{align}
		\label{eq:hirip}
		(1 - \delta)\|x\|_2 \leq \|A x\|_2 \leq (1 + \delta) \|x\|_2.
	\end{align}
	
\end{definition}
\begin{theorem}[HiRIP of Kronecker product \cite{roth2018hierarchical}]
	\label{thm:kronecker_hrip}
	For any matrix $A \in \mathbb{R}^{M \times K}$  that is $(s, \delta_1)$ -RIP and any matrix $B \in \mathbb{R}^{m\times k }$ that is $(\sigma, \delta_2)$-RIP, $A\otimes B\in\mathbb{R}^{Mm \times Kk}$ satisfies $(s, \sigma)$-HiRIP with constant
		$\delta_{s, \sigma} \leq \delta_1 + \delta_2 + \delta_1\delta_2$.
\end{theorem}
\begin{theorem}[Recovery guarantee for hierarchically sparse vectors \cite{roth2016reliable}]
	\label{thm:hirip_recovery}
	Suppose matrix $A \in \mathbb{R}^{m \times k}$ ($k = k_1\times k_2$) is $(3s, 2\sigma)$-HiRIP with constant $\frac{1}{\sqrt{3}}$. Given $y = A x + e$ where $x \in \mathbb{R}^{k}$ is $(s, \sigma)$-hierarchically sparse, there is a polynomial time algorithm, which outputs $\hat{x}$ satisfying $\|x - \hat{x}\|_2 \leq C \|e\|_2$ for some constant $C$.
\end{theorem}
\section{One-Bit Compressive Privatization}
To estimate sparse distributions, Acharya et al.~\cite{acharya2021estimating} simply add a sparse projection operation at the end of the one-bit HR scheme proposed in \cite{acharya2019communication}. 
\para{One-bit HR.} The users are partitioned into $K$ groups of the same size deterministically, with $K$ being the smallest power of $2$ larger than $k$. Let $S_1,\cdots, S_K$ be the groups. Since the partition can be arbitrary, we assume $S_j:=\{i\in[n]~|~ i\equiv j ~\mathrm{mod }~ K\}$. Let $H_K$ be the $K \times K$ Hadamard matrix and $H_{i,j}$ be the $(i,j)$-entry. In one-bit HR, each user $i$ in group $S_j$ with a sample $X_i$ sends a bit $Y_i\in\{0,1\}$ distributed as
 \begin{align}
    % \label{eq:one_bit_priv}
        \operatorname{Pr}\left(Y_{i}=1\right)=\left\{\begin{array}{ll}
\frac{e^{\varepsilon}}{e^{\varepsilon}+1}, & H_{j, X_{i}} = 1, \\
\frac{1}{e^{\varepsilon}+1}, & H_{j,X_{i}} = -1.
\end{array}\right.
    \end{align}
Let $t_j:= P(Y_i = 1| i \in S_j)$ for $j \in [K]$ and $\mathbf{t} := (t_1, \cdots, t_K)$. The key observation is that
\begin{align}
\label{eq:one_bit_hr}
    \frac{e^\eps + 1}{e^\eps - 1}(2\mathbf{t} - \mathbf{1_K}) = H_K \cdot p.
\end{align}
Let $\hat{\mathbf{t}} := (\hat{t}_1,\cdots, \hat{t}_K)$ where $\hat{t}_j := \frac{1}{|S_j|}\sum_{i \in S_j} Y_i$ is the fraction of messages from $S_j$ that are $1$. Then $\hat{\mathbf{t}}$ is an unbiased empirical estimator of $\mathbf{t}$; and $\hat{p} = \frac{e^\eps + 1}{K(e^\eps - 1)}H^T_K(2\mathbf{t} - \mathbf{1_K})$ is an unbiased estimate of $p$ since $\frac{1}{K} H_K^T H_K = I$. 

We note that to make the above estimation process well-defined, the number of samples $n$ must be larger than $K$, since otherwise some group $S_j$ will be empty and the corresponding $\hat{t}_j$ is undefined. Moreover, the proof of \cite{acharya2021estimating} relies on the fact that each $\hat{t}_j$ is the average of $|S_j|$ i.i.d.\ Bernoulli random variables, which implies $\hat{t}_j -t_j$ is sub-Gaussian with variance $\frac{1}{|S_j|}$. However, if the group $S_j$ is empty, this doesn't hold anymore.

\begin{algorithm}
	\SetAlgoLined
	\KwResult{$\hat{p} \in \Delta_k$: an estimate of $p$}
	\KwInput{$X_1, \cdots X_n$ i.i.d from $p$, privacy parameter $\varepsilon$, sparsity $s$, measurement matrix $A \in \mathbb{R}^{m \times k}$}
	For $x \in [m]$, let $B_x := \{y \in [k]: A(x, y) = 1\}$ be the columns where the $x$th row has 1.\\
	Divide the $n$ users into $m$ sets $S_1, \cdots, S_m$ deterministically by assigning all $i\equiv j$ mod $m$ to $S_j$ for $i \in [n]$.\\
	$\forall j \in [m]$ and $\forall i \in S_j$, the distribution of the one-bit message $Y_i$ is
    \begin{align}
    % \label{eq:one_bit_priv}
        \operatorname{Pr}\left(Y_{i}=1\right)=\left\{\begin{array}{ll}
\frac{e^{\varepsilon}}{e^{\varepsilon}+1}, & X_{i} \in B_{j} \\
\frac{1}{e^{\varepsilon}+1}, & \text { otherwise }
\end{array}\right.
    \end{align}
	Let $\hat{\mathbf{t}} := (\hat{t}_1,\cdots, \hat{t}_m)$ where $\forall j \in [m]$, $\hat{t}_j := \frac{1}{|S_j|}\sum_{i \in S_j} Y_i$ is the fraction of messages from $S_j$ that are $1$.\\
	Apply Lemma 1.3, with $y = \frac{e^\eps + 1}{\sqrt{m}(e^\eps - 1)}(2\hat{\mathbf{t}} - \mathbf{1}_m)$, $B = \frac{1}{\sqrt{m}}A$ and sparsity $s$; let $\tilde{p}$ be the output\\
	Compute the projection of $\tilde{p}$ onto $\Delta_k$, denoted as $\hat{p}$.
	\caption{1-bit Compressive privatization}
\label{algorithm:one_bit_cp}
\end{algorithm}

\para{One bit compressive privatization.}  To resolve the above issue, our scheme doesn't apply one-bit HR but a variant of it. The intuition of our compressive privatization mechanism is that when the distributions are restricted to be sparse, by the theory of compressive sensing, one can use far fewer linear measurements to recovery a sparse vector. In our CP method (shown in Algorithm \ref{algorithm:one_bit_cp}), we do not require the response matrix to be invertible as the Hadamard matrix used in one-bit HR. Any matrix $A \in \{-1,+1\}^{m\times k}$ that satisfies the RIP condition will suffice. More specifically, given target sparsity $s$, we require $\frac{1}{\sqrt{m}}A$ to satisfy $(s, 1/\sqrt{2})$-RIP.
The privatization scheme for each user is almost the same as in one-bit HR, with the Hadamard matrix being replaced by the matrix $A$ above; and clearly this also satisfies $\eps$-LDP. In this case, the relation between $\mathbf{t}$ and $p$ in \eqref{eq:one_bit_hr} now becomes to
\begin{align}
    \label{eq:one_bit_cp}
    \frac{e^\eps + 1}{e^\eps - 1}(2\mathbf{t} - \mathbf{1_m}) = A\cdot p.
\end{align}
On the server side, since $A$ is not necessarily invertible, we need to use sparse recovery algorithms to estimate $p$. More specifically, we reformulate \eqref{eq:one_bit_cp} as
\begin{align}
\label{eq:one_bit_cp_recovery}
    \frac{e^\eps + 1}{\sqrt{m}(e^\eps - 1)}(2\hat{\mathbf{t}} - \mathbf{1}_m) = \frac{1}{\sqrt{m}} A \cdot p +  \frac{2(e^\eps + 1)}{\sqrt{m}(e^\eps - 1)}(\hat{\mathbf{t}} - \mathbf{t}).
\end{align}
Then we can directly apply Lemma \ref{lemma:cs_recovery} to compute a sparse vector $\hat{p}$, with $l_2$ error, i.e. $\|p-\hat{p}\|_2$, proportional to $\frac{2(e^\eps + 1)}{\sqrt{m}(e^\eps - 1)}\|\hat{\mathbf{t}} - \mathbf{t}\|_2$. Note $\mathbb{E}[\|\hat{\mathbf{t}} - \mathbf{t}\|_2]$ is the MSE of the empirical estimator $\hat{\mathbf{t}}$, which only depends on $m$ rather than on $k$. Moreover, Lemma \ref{lemma:cs_recovery} can handle approximately sparse vectors, and thus the above argument is immediately applicable to the setting when $p$ is only approximately sparse. 
The result is summarized in the following theorem.
\begin{theorem}[high privacy regime]
\label{thm:one_bit_cp_error_bound}
Given any matrix $A \in \{\pm1\}^{m\times k}$ with $\frac{1}{\sqrt{m}}A$ satisfies $(s, {1}/{\sqrt{2}})$-RIP, assume $\eps=O(1)$ and $p$ is $s$-sparse, then for a target error $\alpha$, the sample complexity of our method is $O(\frac{m}{\eps^2\alpha^2})$ for $\ell_2$ error and  $O(\frac{sm}{\eps^2 \alpha^2})$ for $\ell_1$ error. The $\ell_2$ result also holds for $(s, \sqrt{s}\alpha)$-sparse $p$ and the $\ell_1$ result also holds for $(s,\alpha)$-sparse $p$. The communication cost for each user is 1 bit.
\end{theorem}
\begin{proof}
We first consider the case for $\ell_2$ error. Since $\Delta_k$ is convex, $\|\hat{p} - p\|_2 \leq \| \tilde{p} - p\|_2$. By Lemma 1.3, we know:
\begin{align}
\label{eq:cp_recovery_error}
    \E\left[\|p - \tilde{p}\|_2\right] \leq \frac{C}{\sqrt{s}}\|p - [p]_s\|_1 + D\cdot\frac{2(e^\eps + 1)}{\sqrt{m}(e^\eps - 1)}\E\left[  \|\hat{\mathbf{t}} - \mathbf{t}\|_2\right],
\end{align}
where $C$ and $D$ are absolute constants. Since $\hat{\mathbf{t}}$ is an empirical estimator of $\mathbf{t}$, we have
\begin{align}
\label{eq:t_estimation_error}
\E^2\left[\|\hat{\mathbf{t}} - \mathbf{t}\|_2\right]\leq \E\left[\|\hat{\mathbf{t}} - \mathbf{t}\|_2^2\right]  = \sum_{y = 1}^m \frac{1}{|S_y|^2}\sum_{j \in S_y}\var(Y_j) \leq \frac{m^2}{4n},
\end{align}
where the first inequality is from Jensen's inequality and the last inequality is from that $Y_j \in \{0, 1\}$. Combining \eqref{eq:cp_recovery_error} and \eqref{eq:t_estimation_error} yields that, 
\begin{align}
\label{eq:cp_l2_uppper_bound}
    \E\left[\|p - \tilde{p}\|_2\right] \leq\frac{C}{\sqrt{s}}\|p - [p]_s\|_1 + D\cdot \frac{e^\eps + 1}{e^\eps  - 1}\sqrt{\frac{m}{n}}.
\end{align}
When $p$ is $s$ sparse, the first term in \eqref{eq:cp_l2_uppper_bound} is 0. Thus, when $n \geq \frac{b m}{\alpha^2 \eps^2}$ for some large enough constant $b$, the expected $\ell_2$ error is at most $\alpha$. For $(s, \sqrt{s}\alpha)$-sparse $p$, the first error term in \eqref{eq:cp_l2_uppper_bound} is bounded by $O(\alpha)$, thus the result still holds for approximately sparse case. For $\ell_1$ error, we use $L_1$ projection in the final step, which means projection by minimizing $L_1$ distance. In this way, when $p$ is $s$ sparse, we have 
\begin{align}
\label{eq:l1_projection}
    \|p - \hat{p}\|_1 \leq \|p - \tilde{p}\|_1 + \|\tilde{p} - \hat{p}\|_1 \leq 2  \|\tilde{p} - p\|_1 \leq 2\sqrt{2s}\|p - \tilde{p}\|_2
\end{align}
where the first inequality is from triangle inequality, the second inequality is from the $L_1$ projection and the last inequality is from Cauchy-Schwartz and the fact that $\tilde{p} - p$ is $2s$-sparse. Thus to achieve an $\ell_1$ error of $\alpha$, it's sufficient to get an estimate with $\alpha' = \alpha / \sqrt{s}$ error for $\ell_2$. The sample complexity is $O(sm/\eps^2\alpha^2)$. For $\ell_1$ error with $p$ being $(s, \alpha)$-sparse, now $p - \tilde{p}$ is $(2s, \alpha)$-sparse. We have
\begin{align*}
     \|p-\tilde{p}\|_1 &=\|[p-\tilde{p}]_{2s}\|_1 + \|(p-\tilde{p}) - [p-\tilde{p}]_{2s}\|_1 \\&\le \sqrt{2s}\|p-\tilde{p}\|_2 + \alpha.
\end{align*}
Then, the $\ell_1$ result follows by a similar argument as for the exact sparse case.
\end{proof}

\subsection{Guarantees on Random Matrices}
The measurement matrix we use is $B=\frac{1}{\sqrt{m}} A$, where the entries of $A\in \{-1,+1\}^{m\times k}$ are i.i.d. Rademacher random variables, i.e., takes $+1$ or $-1$ with equal probability. It is known that for $m \ge O(s\log{\frac{k}{s}})$, $B$ satisfies $(s, 1/\sqrt{2})$-RIP with probability $1 - e^{-m}$ \cite{baraniuk2008simple}. By Theorem \ref{thm:one_bit_cp_error_bound}, we have the following theorem. 
\begin{theorem}\label{thm:RandomMain_onebit}
	For $m=O\left(s\log \frac{k}{s} \right)$ and $\eps = O(1)$, if the entries of $A$ are i.i.d.\ Rademacher random variables, then with probability at least $1-e^{-m}$, our method has sample complexity $O\left(\frac{s 
	\log{(k / s)}}{\eps^2\alpha^2}\right)$ for $\ell_2$ error, and $O\left(\frac{s^2\log{(k/s)}}{\eps^2\alpha^2}\right)$ for for $\ell_1$ error. The $\ell_2$ result holds for $(s, \sqrt{s}\alpha)$-sparse $p$ and the $\ell_1$ result holds for $(s,\alpha)$-sparse $p$. The communication cost for each user is 1 bit.
\end{theorem}  
\para{Lower bound.} \cite{acharya2021estimating} proves a lower bound of $n = \Omega(\frac{s^2\log{(k/s)}}{\eps^2 \alpha^2})$ on the sample complexity for $\ell_1$ error with $\eps = O(1)$. This matches our bound up to a constant.
\section{Symmetric Compressive Privatization}\label{sec:symmetric}
The one-bit compressive privatization scheme is asymmetric, where users in different groups apply different privatization schemes. In this section, we introduce a symmetric version of compressive privatization, which could be easier to implement in real applications. 

To estimate an unknown distribution $p \in \mathbb{R}^k$, all previous symmetric LDP mechanisms essentially apply a probability transition matrix $Q$ mapping $p$ to $q$, where $q$ is the distribution of the privatized samples. The central server get $n$ independent samples from $q$, from which it computes an empirical estimate of $q$, denoted as $\hat{q}$, and then computes an estimator of $p$ from $\hat{q}$ by solving $Qp = \hat{q}$. The key is to design an appropriate $Q$ such that it satisfies privacy guarantees and achieves low recovery error. The error of $\hat{p} = Q^{-1}p$ is dictated by the estimation error of $\hat{q}$ and the spectral norm of $Q^{-1}$. In our symmetric scheme, we map $p$ to a much lower dimensional $q$, and then $\hat{q}$ with similar estimation error can be obtained with \emph{much less number of samples}. However, now $Q$ is not invertible; to reconstruct $\hat{p}$ from $\hat{q}$, we use sparse recovery \cite{candes2005decoding, candes2006stable}. 
\para{Privatization.} Our mechanism $Q$ is a mapping from $[k]$ to $[m]$. For each $x\in [k]$, we pick a set $C_x \subseteq [m]$, which will be specified later, and let $n_x= |C_x|$. Our privatization scheme $Q$ is given by the conditional probability of $y$ given $x$:
\begin{align}
\label{equation:Q}
Q(y |x):=\left\{\begin{array}{ll}
\frac{e^{\varepsilon}}{n_x e^{\varepsilon}+m-n_x} & \text { if } y \in C_{x}, \\
\frac{1}{n_x e^{\varepsilon}+m-n_x} & \text { if } y \in [m] \backslash C_{x}.
\end{array}\right.
\end{align}
Note $Q(y |x)$ is an $m$-ary distribution for any $x\in[k]$.  For each $x\in [m]$, we define its incidence vector as $I_x\in \{-1,+1\}^m$ such that $I_{x}(j) =+1$ iff $j\in C_x$. Let $A \in \mathbb{R}^{m \times k}$ be the matrix whose $x$-th column is $I_x$. Each user $i$ with a sample $X_i$ generates $Y_i$ according to (\ref{equation:Q}) and then send $Y_i$ to the server with communication cost $O(\log m)$ bits.
\para{Sufficient conditions for $Q$.} The difference between our mechanism, RR \cite{warner1965randomized} and HR \cite{acharya2018hadamard} is the choice of each $C_x$, or equivalently the matrix $A$. In RR, $C_x =\{x\}$ for all $x\in[k]$, while in HR, $A$ is the Hadamard matrix.  In our privatization method, any matrix $A$ whose column sums are close to $0$ and that satisfies RIP will suffice. More formally, given target error $\alpha$, privacy parameter $\eps$ and target sparsity $s$, we require $A$ to have the following $2$ properties:
\begin{itemize}
	\item \textbf{P1:} $(1-\beta)\frac{m}{2} \leq n_i \leq (1 + \beta)\frac{m}{2}$ for all $i\in [k]$, with $\beta\le \frac{\eps}{2}$ and $\beta \le {c \alpha}$ for some $c$ depending on the error norm.
	\item \textbf{P2:} $\frac{1}{\sqrt{m}}A$ satisfies $(s, \delta)$-RIP, where $s$ is the target sparsity and $\delta\le 1/\sqrt{2}$.
\end{itemize}
In \cite{acharya2018hadamard}, Hadamard matrix is used to specify each set $C_x$. The proportion of $+1$ entries is exactly half in each column of $H$ (except for the first column), and thus P1 is automatically satisfied with $\beta = 0$. Since Hadamard matrix is othornormal, it is $(k,0)$-RIP.
\subsection{Estimation Algorithm}
We first show how to model the estimation of $p$ as a standard compressive sensing problem. Recall $n_i$ is the number of $+1$'s in the $i$th column of $A$. Let $q\in\Delta_m$ be the distribution of a privatized sample, given that the input sample is distributed according to $p\in \Delta_k$. Then, for each $j\in [m]$, we have

\begin{align*}
q_j &= \sum_{i = 1}^{k}p_i\cdot Q(Y= j |X = i) \\ &= \sum_{i: j\in C_i} \frac{e^{\varepsilon} \cdot p_i}{n_i e^{\varepsilon}+m-n_i}  + \sum_{i: j \in [k]\backslash C_i} \frac{ p_i}{n_i e^{\varepsilon}+m-n_i}.
\end{align*}

By writing the above formula in the matrix form, we get 
\begin{align}
\label{equation:p_to_q}
q = \left(\frac{e^\varepsilon - 1}{2}A + \frac{e^\varepsilon + 1}{2}J\right) D p,
\end{align}
where $J \in \mathbb{R}^{m \times k}$ is the all-one matrix and $D$ is the diagonal matrix with $d_i = \frac{1}{n_i e^\varepsilon +m - n_i}$ in the $i$th diagonal entry. 
As mentioned above, the matrix $\frac{1}{\sqrt{m}} A$ to be used will satisfy RIP. We then rewrite (\ref{equation:p_to_q}) to the form of a standard noisy compressive sensing problem
\begin{align*}
\underbrace{\frac{(e^\varepsilon + 1)}{(e^\varepsilon - 1)} \left(\sqrt{m}q -\frac{\mathbf{1}}{\sqrt{m}} \right)}_y \nonumber= \underbrace{\frac{1}{\sqrt{m}}A}_B \underbrace{ (D' p)}_x  + \underbrace{\frac{e^\varepsilon + 1}{\sqrt{m}(e^\varepsilon - 1)}J (D' - I)  p}_e.
\end{align*}
where $\mathbf{1} \in \mathbb{R}^m$ is the all-one vector, $I \in \mathbb{R}^{k \times k}$ is the identity matrix and $D' = \frac{m(e^\varepsilon + 1)}{2} D$. 
% Since the exact $q$ is unknown in our problem and we can only use the empirical estimate $\hat{q}$ from privatized samples to recover $p$. In this wa \eqref{equation:p_to_q} can be rewritten as the following form of a standard noisy compressive sensing problem
The exact $q$ is also unknown, and we can only get an empirical estimate $\hat{q}$ from the privatized samples. So, we need to add a new noise term that corresponds to the estimation error of $\hat{q}$, and the actual under-determined linear system is
\begin{align}
\label{eq:p_to_q_cs_final}
\frac{(e^\varepsilon + 1)}{(e^\varepsilon - 1)} \left(\sqrt{m}\hat{q} - \frac{\mathbf{1}}{\sqrt{m}}\right) &= \frac{1}{\sqrt{m}}A D' p  +  \underbrace{\frac{e^\varepsilon  + 1}{\sqrt{m}(e^\varepsilon - 1)}J(D' - I) p}_{e_1} \notag\\ &+ \underbrace{\frac{\sqrt{m}(e^\varepsilon + 1)}{e^\varepsilon - 1}(\hat{q} - q)}_{e_2}.\end{align}
Given the LHS of \eqref{eq:p_to_q_cs_final}, $\frac{1}{\sqrt{m}}A$ and a target sparsity $s$, we reconstruct $\widehat{D'p}$ by applying Lemma~\ref{lemma:cs_recovery}. Then compute $\hat{p}' = D'^{-1}\widehat{D'p}$ ($D'$ is known) and project it to the probability simplex. The pseudo code of the algorithm is presented in Algorithm \ref{algorithm:scp}.
\begin{algorithm}[h]
	\SetAlgoLined
	\KwResult{$\hat{p} \in \Delta_k$: an estimate of $p$}
	\KwInput{$X_1, \cdots, X_n$ i.i.d from $p$, privacy parameter $\eps$, sparsity $s$, $A \in \mathbb{R}^{m \times k}$}
	$\forall x \in [k]$, Let $C_x := \{y: A(y, x) = 1\}$. Let $n_x$ be the number of $+1$ in the $x$-th column of $A$. Then for each $X_i, i \in [n]$, the  privatized sample $Y_i$ is generated  according to the following distribution
	\begin{align*}
	\operatorname{Pr}[Y_i = y | X_i = x] = \left\{\begin{array}{cc}
	\frac{e^\eps}{n_x e^\eps + m - n_x}   &  y \in C_x,\\
	\frac{1}{n_x e^\eps + m - n_x} &  \mbox{otherwise}
	\end{array}\right.
	\end{align*}\\
	$\hat{q}$ = $(0, 0, \cdots 0) \in \mathbb{R}^m$, 	$\mathbf{1} = (1,1,\cdots, 1) \in \mathbb{R}^{m}$ \\
	\For{$i \gets 1$ \KwTo $m$}{
		$\hat{q}[i]$ = $\frac{\sum_{j=1}^{n}\mathbb{I}(Y_j = i)}{n}$
	}
	Apply Lemma 1.3, with $y=\frac{(e^\varepsilon + 1)}{(e^\varepsilon - 1)} \left(\sqrt{m}\hat{q} - \mathbf{1}\right)$, $B=\frac{1}{\sqrt{m}}A$ and sparsity $s$; let $f$ be the output \\
	Compute $\hat{p}' = D'^{-1}f$, and then compute the projection of $\hat{p}'$ onto the set $\Delta_k$, denoted as $\hat{p}$\\
	\Return $\hat{p}$
	\caption{Symmetric compressive privatization}
	\label{algorithm:scp}
\end{algorithm}
\subsection{Privacy Guarantee and Sample Complexity}
In this section, we will provide the privacy guarantee and sample complexity of our symmetric privatization scheme. All the proofs are in the supplementary material.
\begin{lemma}[Privacy Guarantee]
\label{lemma:scp_privacy}
	If $(1-\beta)\frac{m}{2} \leq n_i \leq (1 + \beta)\frac{m}{2}$ for all $i \in [k]$ and $0 \leq  \beta <1$, then the privacy mechanism $Q$ from \eqref{equation:Q} satisfies $(\varepsilon+2\beta)$-LDP.
\end{lemma}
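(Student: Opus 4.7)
The plan is to reduce the LDP inequality to a direct algebraic bound on the worst-case ratio of $Q(y\mid x)$ to $Q(y\mid x')$, then simplify using the hypothesis on the $n_i$'s. Specifically, I would first note that the only two values $Q(y\mid x)$ can take are $\frac{e^\varepsilon}{n_x(e^\varepsilon-1)+m}$ (when $y\in C_x$) and $\frac{1}{n_x(e^\varepsilon-1)+m}$ (when $y\notin C_x$). So to maximize $Q(y\mid x)/Q(y\mid x')$ one should take $y\in C_x\setminus C_{x'}$, yielding
\begin{align*}
\frac{Q(y\mid x)}{Q(y\mid x')} \;=\; e^{\varepsilon}\cdot \frac{n_{x'}(e^{\varepsilon}-1)+m}{n_{x}(e^{\varepsilon}-1)+m}.
\end{align*}
It therefore suffices to prove that the second factor is at most $e^{2\beta}$.

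Next I would apply the hypothesis $(1-\beta)\tfrac{m}{2}\le n_i\le(1+\beta)\tfrac{m}{2}$. Since $e^\varepsilon-1>0$, the fraction above is increasing in $n_{x'}$ and decreasing in $n_x$, so substituting the extremes and dividing numerator and denominator by $m/2$ gives
\begin{align*}
\frac{n_{x'}(e^{\varepsilon}-1)+m}{n_{x}(e^{\varepsilon}-1)+m}\;\le\;\frac{(1+\beta)(e^{\varepsilon}-1)+2}{(1-\beta)(e^{\varepsilon}-1)+2}\;=\;1+\frac{2\beta(e^{\varepsilon}-1)}{(1-\beta)(e^{\varepsilon}-1)+2}.
\end{align*}

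The final step is to show this last quantity is at most $e^{2\beta}$. Using the elementary inequality $e^{2\beta}\ge 1+2\beta$, it suffices to verify
\begin{align*}
\frac{(e^{\varepsilon}-1)}{(1-\beta)(e^{\varepsilon}-1)+2}\;\le\;1,
\end{align*}
which after cross-multiplying is equivalent to $\beta(e^{\varepsilon}-1)\le 2$. In the high-privacy regime considered throughout the paper ($\varepsilon\le 1$) one has $e^{\varepsilon}-1<e-1<2$, so the inequality holds for every $\beta\in(0,1)$.

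The main obstacle, and the only delicate point, is the final algebraic comparison: the naive bound $\tfrac{1+\beta}{1-\beta}$ (which one is tempted to obtain by pulling the factors apart) is actually strictly larger than $e^{2\beta}$, so one must instead keep the constant term $2$ in the denominator and exploit the inequality $\beta(e^{\varepsilon}-1)\le 2$, which is really the condition that makes the proof work. Everything else is bookkeeping: identifying the worst-case $y$, monotonicity in $n_x,n_{x'}$, and the standard inequality $1+2\beta\le e^{2\beta}$.
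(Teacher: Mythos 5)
Your proof is correct and takes essentially the same route as the paper: both bound the worst-case ratio by $e^\varepsilon\cdot\frac{n_{x'}(e^\varepsilon-1)+m}{n_x(e^\varepsilon-1)+m}$, substitute the extreme values $(1\pm\beta)\frac{m}{2}$, show the resulting factor is at most $1+2\beta$, and finish with $1+2\beta\le e^{2\beta}$. The only difference is cosmetic algebra: your expression $1+\frac{2\beta(e^\varepsilon-1)}{(1-\beta)(e^\varepsilon-1)+2}$ is identically the paper's $\frac{1+\beta\cdot\frac{e^\varepsilon-1}{e^\varepsilon+1}}{1-\beta\cdot\frac{e^\varepsilon-1}{e^\varepsilon+1}}$, and your side condition $\beta(e^\varepsilon-1)\le 2$ plays the same role as (and is implied by) the paper's use of $\frac{e^\varepsilon-1}{e^\varepsilon+1}\le\frac{1}{2}$ in the regime $\varepsilon\le 1$.
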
 
When the matrix $A$ used in \eqref{equation:Q} satisfies$ (1-\eps)\frac{m}{2} \leq n_i \leq (1 + \eps)\frac{m}{2} \textrm{ for all } i \in [k]
$, then $Q$ is $3\eps$-LDP.  We can rescale $\eps$ in the beginning by a constant to ensure $\eps$-LDP, which will only affect the sample complexity by a constant factor. Next we consider the estimation error. By Lemma \ref{lemma:cs_recovery}, the reconstruction error depends on the $\ell_2$ norm of $e_1$ and $e_2$  in \eqref{eq:p_to_q_cs_final}.
\begin{lemma}
\label{lemma:scp_noise_1}
For a fixed $ \beta \in [0,0.5)$, if $n_i \in (1 \pm \beta) \frac{m}{2}$ for all $i\in[k]$, then $\norm{e_1}_{2} \leq \frac{\beta }{1-\beta} \le 2\beta$.
\end{lemma}
\begin{lemma}
\label{lemma:scp_noise_2}
$
\mathbb{E}\left[\|e_2\|_2\right] \leq \frac{e^\varepsilon + 1}{e^\varepsilon -1}\sqrt{\frac{m}{n}}$.
\end{lemma}
Combining Lemma \ref{lemma:scp_noise_1}, \ref{lemma:scp_noise_2} and Lemma \ref{lemma:cs_recovery}, we can bound the estimation error. 
\begin{theorem}[Estimation error]
\label{theorem:scp_reconstruction_error}
If $A$ satisfies two properties in Section \ref{sec:symmetric}, for some constant $C$,
\begin{align*}
    \mathbb{E}\left[\norm{p - \hat{p}'}_2\right] &\leq \left(1 + \frac{\beta}{2}\right)\left(2D\beta+ \frac{D(e^\varepsilon + 1)}{e^\varepsilon -1}\sqrt{\frac{m}{n}} \right)\\&+\left(1+\frac{\beta}{2}\right)^2\left( \frac{C}{\sqrt{s}}\|p-[p]_s\|_1\right).
\end{align*}
\end{theorem}

The sample complexity to achieve an error $\alpha$ and $\eps$-LDP for $0\le \eps\le 1$ is summarized as follows.
\begin{corollary}
\label{corollary:sample complexity_symm}
If $A$ satisfies two properties in section \ref{sec:symmetric} with $\beta\le c\alpha/4D$, with  $c=1$ for $\ell_2$ error and $c=\frac{1}{\sqrt{s}}$ for $\ell_1$ error,  and $p$ is $s$ sparse, the sample complexity of our symmetric compressive privatization scheme is $O(\frac{m}{\epsilon^2\alpha^2})$ for $\ell_2$ error and  $O(\frac{sm}{\epsilon^2 \alpha^2})$ for $\ell_1$ error. The $\ell_2$ result also holds for $(s, \sqrt{s}\alpha)$-sparse $p$ and the $\ell_1$ result also holds for $(s,\alpha)$-sparse $p$. The communication cost for each user is $\log m$ bits.
\end{corollary}
{This corollary can be directly derived by applying a similar proof as that of Theorem \ref{thm:one_bit_cp_error_bound}, so we omit the proof here.} 
\subsection{Guarantees on Random Matrices}
The response matrix we used here is $B = \frac{1}{\sqrt{m}}[A_{\frac{m}{2}}; -A_{\frac{m}{2}}]$,  where $A_{\frac{m}{2}}$ is a $\frac{m}{2} \times {k}$ Rademacher matrix. It can be easily shown that, if $m = \Omega(s\log \frac{k}{s})$, $B$ is $(s, \frac{1}{\sqrt{2}})$-RIP with probability $1 - e^{-m}$. At the same time, $B$ satisfies P1 with $\beta = 0$. Thus by Corollary \ref{corollary:sample complexity_symm}, we get the following result.
\begin{theorem}\label{thm:RandomMain_symm}
	For $m=\Omega(s\log\frac{k}{s})$, if $A$ is defined as above, then with probability at least $1-e^{-m}$, our symmetric compressive privatization scheme has sample complexity $O\left(\frac{m}{\eps^2\alpha^2}\right)$ for $\ell_2$ error, and $O\left(\frac{sm}{\eps^2\alpha^2}\right)$ for for $\ell_1$ error. The $\ell_2$ result holds for $(s, \sqrt{s}\alpha)$-sparse $p$ and the $\ell_1$ result holds for $(s,\alpha)$-sparse $p$. The communication cost for each user is $\log m = O(\log s+\log\log k)$ bits.
\end{theorem} 
\para{Communication lower bound}. Note that the sample complexity is the same as that of one-bit compressive privatization. But the communication complexity is $\log m$ bits. \cite{acharya2019communication} proves that, for symmetric schemes, the communication cost is at least $\log k - 2$ for general distribution estimation. Thus, even if the sparse support is known, the communication cost is at least $\log s -2$ bits. Our symmetric scheme requires $\log s + \log\log k$ bits, which is optimal up to a $\log\log k$ additive term.
\section{Recursive Compressive Privatization for Medium Privacy Regimes}\label{sec:median}
For high privacy regime $\eps = O(1)$, we have provided symmetric and asymmetric privatization schemes that both achieve optimal sample  and communication complexity. For medium privacy regime $1 < e^\eps < k$, the relationship between sample complexity, privacy, and communication cost become more complicated.  Recently, \cite{chen2020breaking} provides a clean characterization for any $\eps$ and communication budget $b$ for dense distributions. Interestingly, they show that the complexity is determined by the more stringent constraint, and the less stringent constraint can be satisfied for free. In this section, we provide an analogous result for sparse distribution estimation for privacy regime where $1 \le e^\eps \le s$. 

Our RCP scheme (shown in Algorithm \ref{algorithm:rcp}) consists of three steps including random permutation, privatization and estimation. Let $X$ be a element sampled from $p$, which is viewed as a one-hot vector. Let $A$ be a measurement matrix and 
$Y = AX$. Since $\E[Y]= A\E[X] = Ap$, we want to get an estimator of $p$ by recovering from the empirical mean of $Y$. In one-bit CP, $A$ is a Rademacher matrix, while in RCP, we use the kronecker product of two RIP matrices. In other words, $A = A_1 \otimes A_2$, where $A_1$ and $A_2$ both satisfy RIP condition. It's known from kronecker compressive sensing \cite{duarte2011kronecker} that $A$ also satisfies $s$-RIP if both $A_1$ and $A_2$ satisfy $s$-RIP. However, $s$-RIP condition is too stringent for $A_2$, since $A_2$ measures each block of $p$ and the sparsity of each block could be much less than $s$ on average. 

We use hierarchical  compressive sensing. To make the distribution $p$ hierarchically sparse (see definition~\ref{def:hsparsity}), we first randomly permute $p$ in the beginning and let $p'$ be the resulting distribution. Thus the sparsity of each block in $p'$ is roughly $s/L$, where $L$ is the number of blocks, and hence $p'$ is nearly $(L, s/L)$-hierarchically sparse.
\begin{algorithm}[ht]
	\SetAlgoLined
	\KwResult{$\hat{p} \in \Delta_k$: an estimate of $p$}
	\KwInput{$X_1, \cdots X_n$ i.i.d.\ samples from $p$, privacy parameter $\varepsilon$, sparsity $s$, matrix $A_1 \in \mathbb{R}^{L \times L}$, $A_2 \in \mathbb{R}^{m \times \frac{K}{L}}$ where $K = 2^{\lceil \log_2 k \rceil}$ and $L = \min\{2^b, 2^{\lceil\log_2 e^\eps\rceil}\}$, public random permutation matrix $P \in \mathbb{R}^{K \times K}$. (We represent $X_1, \cdots, X_n$ as one-hot vectors.)}
	Divide the $n$ users into $m$ groups $S_1, \cdots, S_m$: $S_j:=\{i\in[n]~|~ i\equiv j ~\mathrm{mod }~ m\}$.\\
	$\forall j \in [m]$ and $\forall i \in S_j$, pad $(K - k)$ zeroes to the end of $X_i$ and get permuted sample $X_i' = PX_i$. \\
	Define $Q_j(X_i') = [(A_2)_j\cdot X_i'^{(1)}, \cdots, (A_2)_j\cdot X_i'^{(L)}] \in \{-1, 0, 1\}^{L}$. The privatized output $\hat{Q}_j(X_i')$ is defined as follows
	\begin{align}
	\label{eq:rcp_dist}
	\hat{Q}_j(X_i')=\left\{\begin{array}{ll}
	Q_j(X_i'), & \mbox{w.p. $\frac{e^\eps}{e^\eps + 2L - 1}$} \\
	Q' \in \mathcal{Q}\backslash \{Q_j(X_i')\}, & \mbox{w.p. $\frac{1}{e^\eps + 2L - 1}$}
	\end{array}\right.
	\end{align}
	where $\mathcal{Q} = \{\pm e_1, \pm e_2, \cdots, \pm e_L\}$ is the collection of $2L$ standard basis vectors. \\
	For each $j' \in [mL]$ such that $j'\equiv j$ (mod $m$) and $j' = j + (t - 1) \cdot m$, the server computes 
	\begin{align}
	\hat{q}_{j'} = \frac{m}{n}\left(\frac{e^\eps + 2L - 1}{e^\eps - 1}\right)\sum_{i \in S_j} (A_1)_t \cdot \hat{Q}_j(X_i').
	\end{align}\\
	Let $\hat{q} := (\hat{q}_1,\cdots, \hat{q}_{mL})$. Apply Theorem \ref{thm:hirip_recovery} with $y = \frac{1}{\sqrt{mL}}\hat{q}$, measurement matrix $\frac{1}{\sqrt{mL}}(A_1 \otimes A_2)$ and hierarchical sparsity $(L, (1+\beta)\frac{s}{L})$; let $\hat{p}'$ be the output. Let $\tilde{p}$ be the first $k$ elements of $P^{-1}\hat{p}'$.\\
	Compute the projection of $\tilde{p}$ onto $\Delta_k$, denoted as $\hat{p}$.
	\caption{Recursive Compressive Privatization}
	\label{algorithm:rcp}
\end{algorithm}
\para{Hierarchical sparsity after random permutation.} Let $P$ be a random permutation matrix, which is public information. Each user $i$ with sample $X_i$ first compute $X_i' = PX_i$. So $p' = Pp$. We divide $p'$ into $L$ consecutive blocks, then each of the $L$ blocks of $p'$ has sparsity around $\frac{s}{L}$ with high probability. Let $\mathcal{E}$ be the event that
\begin{align}
\label{eq:event_E}
     s_i \leq (1 + \beta)\frac{s}{L},  \textrm{ for all } i \in [L], \mbox{for some $\beta>0$},
\end{align}
where $s_t$ is the sparsity of the $t$th block in $p'$. For notation convenience, we simply use $X_i$ to denote the permuted one-hot sample vector of user $i$. By concentration inequalities,  $\mathcal{E}$ happens with high probability. One twist here is that we cannot apply standard Chernoff-Hoeffding bound, since the sparsity in each block is not a sum of i.i.d.\ random variables. But random permutation random variables are known to be negatively associated (see e.g.\ \cite{wajc2017negative}), so the concentration bounds still holds. We have the following lemma, the proof of which is provided in the supplementary material.
\begin{lemma}\label{lemma:rcp_event_e}
Event $\mathcal{E}$ holds with probability at least $1 - Le^{-\frac{\beta^2s}{2L}}$, and when $\mathcal{E}$ happens, $p'$ is $(L, (1 + \beta)\frac{s}{L})$-hierarchically sparse. 
\end{lemma}
\para{Privatization.} 
In our privatization step, the response matrix is of the form $A = A_1 \otimes A_2$, where $A_2 \in \mathbb{R}^{m \times \frac{k}{L}} $ and $A_1 \in \mathbb{R}^{L \times L}$ are two $\pm 1$ matrices. User $i$ will send a privatized version of $Y_i=AX_i$.  Let $a_{ij}$ be the $(i,j)$-entry of $A_1$, then
\begin{align*}
    Y_i = AX_i =\left[\begin{array}{ccc}
a_{11} {A_2} & \cdots & a_{1 L} {A_2} \\
\vdots & \ddots & \vdots \\
a_{L1} {A_2} & \cdots & a_{L L} {A_2}
\end{array}\right]
\left[\begin{array}{c}
X_i^{(1)} \\
\vdots \\
X_i^{(L)}
\end{array}\right],
\end{align*}
where the one-hot sample vector $X_i$ is divided into $L$ blocks. Since $X_i$ has only one non-zero, the vector $Y_i$ can be encoded with $ m + \log L$ bits, where $\log L$ bits is to specified the block index $\ell$ that contains the non-zero of $X_i$ and $m$ bits is for $A_2X_i^{(\ell)}$. However, this is still too large, so user will only pick one bit from $A_2X_i^{(\ell)}$, which is the $j$th bit if $i\in S_j$. Then the user privatize the $1+\log L$ bits using RR mechanism \cite{warner1965randomized} with alphabet size $2^{2L}$. More formally, let $Q_j(X_i)=[(A_2)_j \cdot X_i^{(1)}, \cdots, (A_2)_j\cdot X_i^{(L)}] \in \{-1, 0, 1\}^{L}$, which is also a one-hot vector. Let $\ell$ be the block index such that $X_i^{(\ell)} \neq 0$, then the $\ell$th bit in $Q_j(X_i)$  is the only non-zero entry, which has value $(A_2)_j\cdot X_i^{(\ell)}$. Then the user computes a privatization of $Q_j(X_i)$ (see \eqref{eq:rcp_dist} in Algorithm~\ref{algorithm:rcp}) with $2^{2L}$-RR. Clearly the communication cost is $l = \log_2 L + 1$ bits. 
\para{Estimation via hierarchical sparse recovery.} By Definition \ref{def:hsparsity}, $p'$ is $(L, (1 + \beta)\frac{s}{L})$-hierarchical sparse. To recover $p'$, by Theorem \ref{thm:kronecker_hrip} and \ref{thm:hirip_recovery}, $\frac{1}{\sqrt{L}}A_1$ and $\frac{1}{\sqrt{m}}A_2$ are required to satisfy $3L$-RIP and $\frac{2(1+\beta)s}{L}$-RIP condition respectively. Since $A_1$ is square, we can use the Hadamard matrix $H_L$. For $A_2$, we use a Rademacher matrix with number of rows $m = \Theta\left((1 + \beta)\frac{s\log({k}/{(1 + \beta) s})}{L}\right)$, so that $\frac{1}{\sqrt{m}}A_2$ satisfies $\frac{2(1+\beta)s}{L}$-RIP. Let $q = Ap'$, which is equivalent to 
\begin{align}
    \frac{1}{\sqrt{mL}}\hat{q} = \frac{1}{\sqrt{mL}} Ap' + \frac{1}{\sqrt{mL}}(\hat{q} - q).
\end{align}
In the estimation algorithm (step 5 in Algorithm \ref{algorithm:rcp}), we use $\hat{q}$ to recover $p'$. By Theorem \ref{thm:hirip_recovery}, we have
\begin{align}
\label{eq:meidum_estimation_error}
    \E[\|p - \hat{p}\|_2] \leq \E[\|p' - \hat{p}'\|_2] \leq \frac{1}{\sqrt{mL}} \mathbb{E}[\|\hat{q} - q\|_2] 
\end{align}
Thus, the estimation error of $\hat{p}$ is bounded by the error of $\hat{q}$. The following lemma gives an upper bound on the error of $\hat{q}$.
\begin{lemma}
\label{lemma:rcp_q_j_variance}
$\forall j' \in [mL], \mathbb{E}\left[\left(\hat{q}_{j^{\prime}}-q_{j^{\prime}}\right)^{2}\right] \leq \frac{m}{n}\left(\frac{e^{\varepsilon}+2L-1}{e^{\varepsilon}-1}\right)^{2}$.
\end{lemma}
Note that we require $m = C'\cdot \frac{(1 + \beta)s\log\left(k / (1 + \beta)s\right)}{L}$ for some absolute constant $C'$. It can be seen that the error in Lemma~\ref{lemma:rcp_q_j_variance} is minimized when $L = \Theta(e^\eps)$ and decreasing as $L$ increases from $1$ to $\Theta(e^{\eps})$. Note that the communication cost is $\log L+1$. Thus, if we are further given a communication budget $b$, we need to set $L \leq 2^{b-1}$. When $e^\eps < 2^b$, the best $L$ is $e^\eps$, which leads to optimal error. If $e^\eps \ge 2^b$, i.e., communication becomes the more stringent constraint, then set $L=2^b$. In other words, $L = \min\{2^b, e^\eps\}$. Combining  \eqref{eq:meidum_estimation_error} and Lemma \ref{lemma:rcp_q_j_variance}, we have the following results on the sample complexity for medium privacy $1 \le e^\eps \le s$. 
\begin{theorem}
\label{thm: rcp_medium_result}
Given $\eps$ and a communication budget $b$, with $1 \le e^\eps , 2^b \le s$ and let $L = \min\{2^b, e^\eps\}$. For $m = \Theta(\frac{(1 + \beta)s\log(k / (1 + \beta)s)}{L})$, with probability at least $1 - Le^{-\frac{\beta^2s}{2L}} - e^{-m}$, our scheme is $\eps$-LDP and has communication cost $\log L +1$, and the sample complexity for $\ell_2$ error is $O\left((1 + \beta)\frac{s\log({k} /{(1 + \beta)s})}{L\alpha^2}\right)$; for $\ell_1$ error, the sample complexity is $O\left((1 + \beta)\frac{s^2\log({k}/{(1 + \beta)s})}{L\alpha^2}\right)$.
\end{theorem}
\begin{figure*}[!ht]
	\centering
	\begin{subfigure}{0.24\textwidth}
		\centering
		\includegraphics[width=\linewidth]{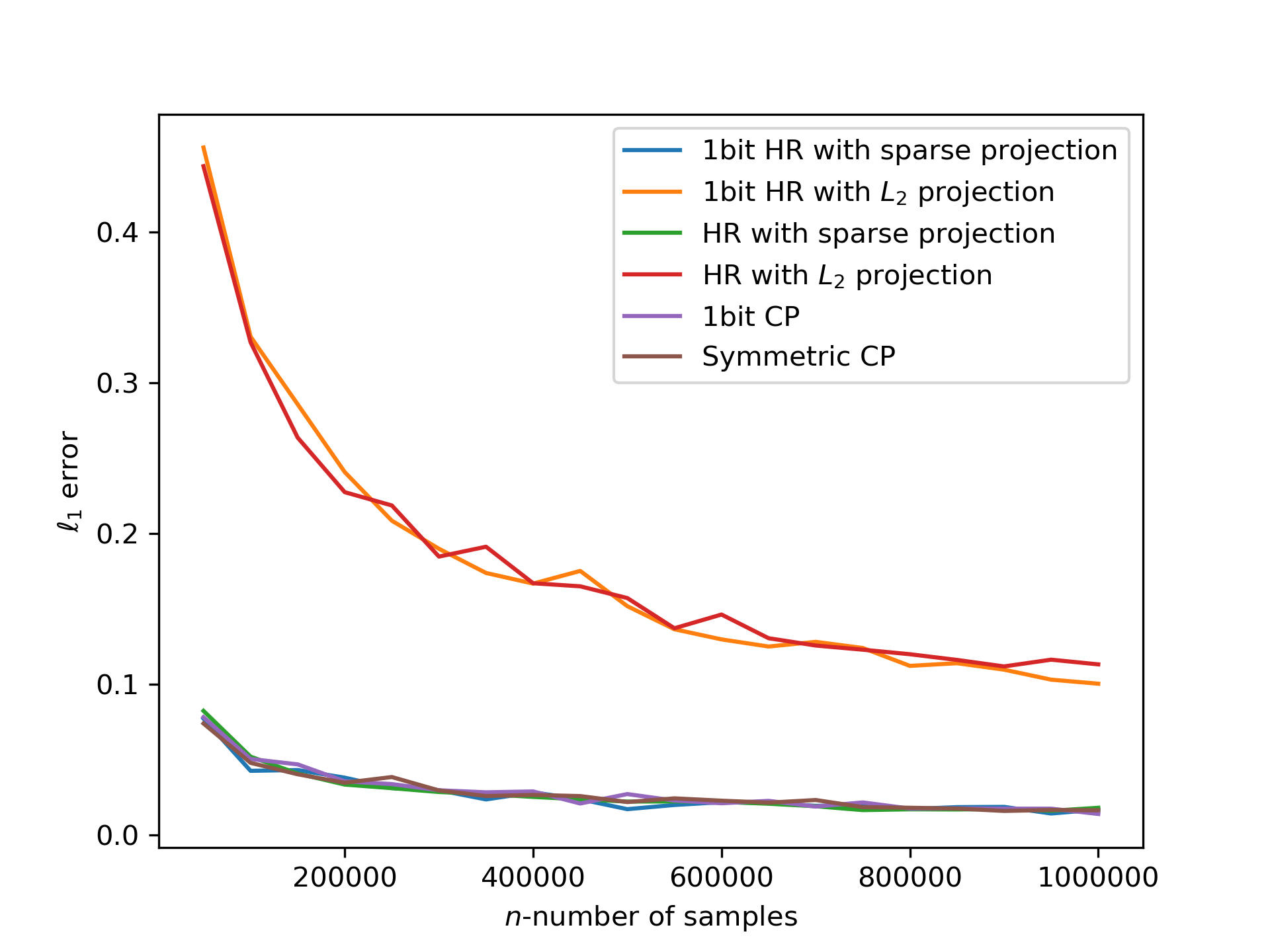}
		\caption{$\mathsf{Unif}(10) $}
		\label{fig:sfig1}
	\end{subfigure}
	\begin{subfigure}{.24\textwidth}
		\centering
		\includegraphics[width=\linewidth]{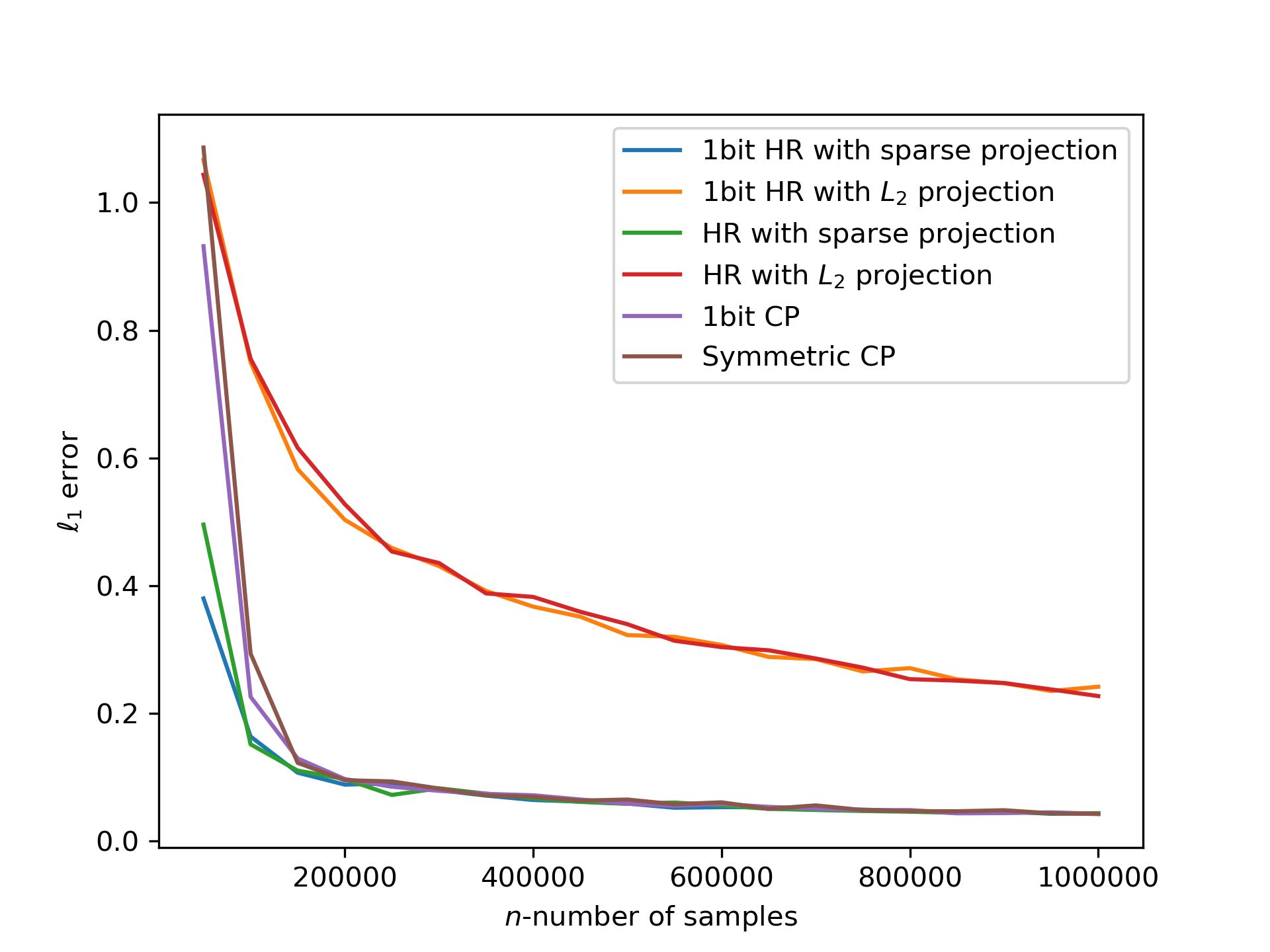}
		\caption{$\mathsf{Unif}(25)$}
		\label{fig:sfig2}
	\end{subfigure}
	\begin{subfigure}{.24\textwidth}
		\centering
		\includegraphics[width=\linewidth]{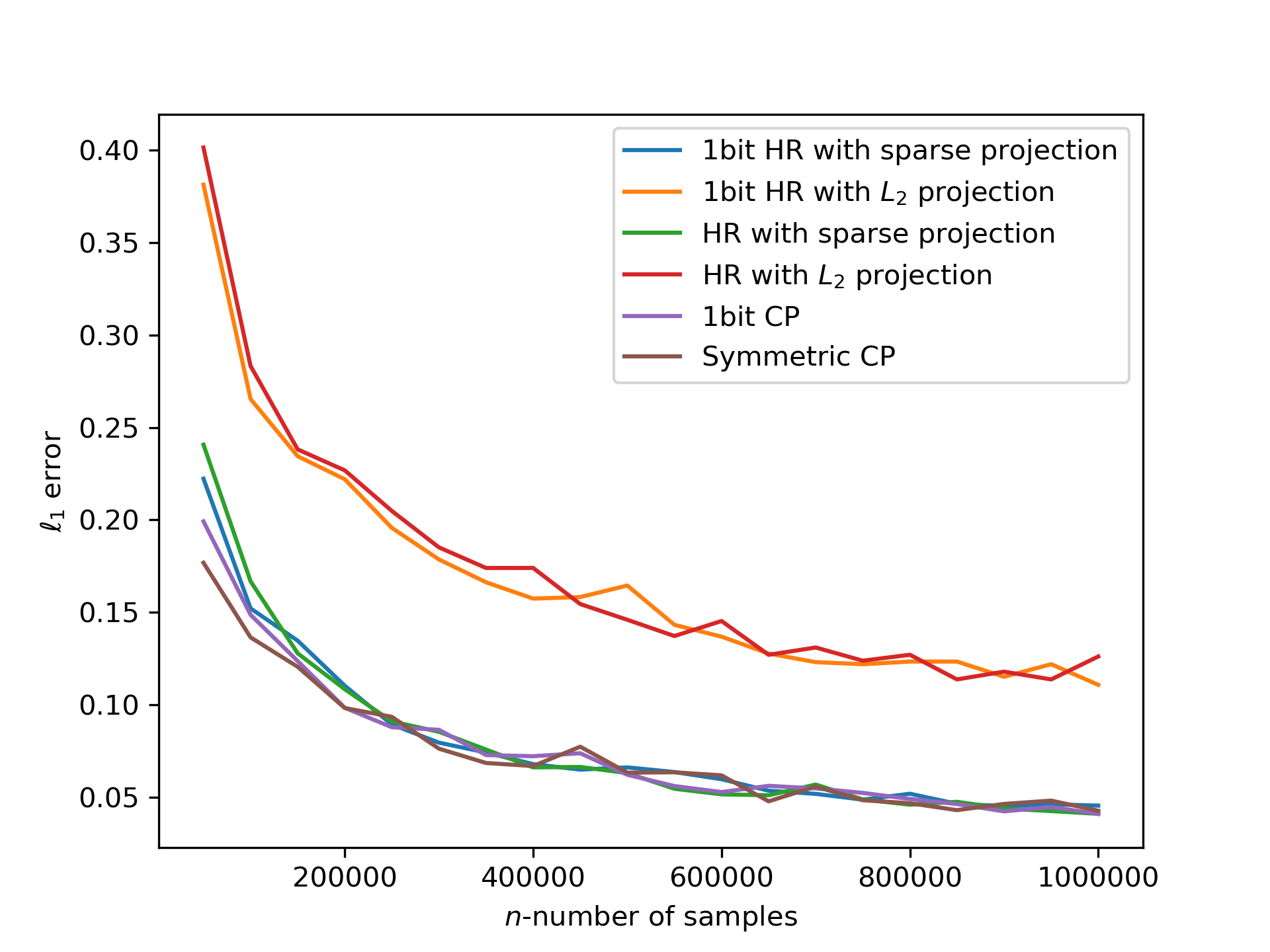}
		\caption{$\mathsf{Geo}(0.6)$}
		\label{fig:sfig3}
	\end{subfigure}
	\begin{subfigure}{.24\textwidth}
		\centering
		\includegraphics[width=\linewidth]{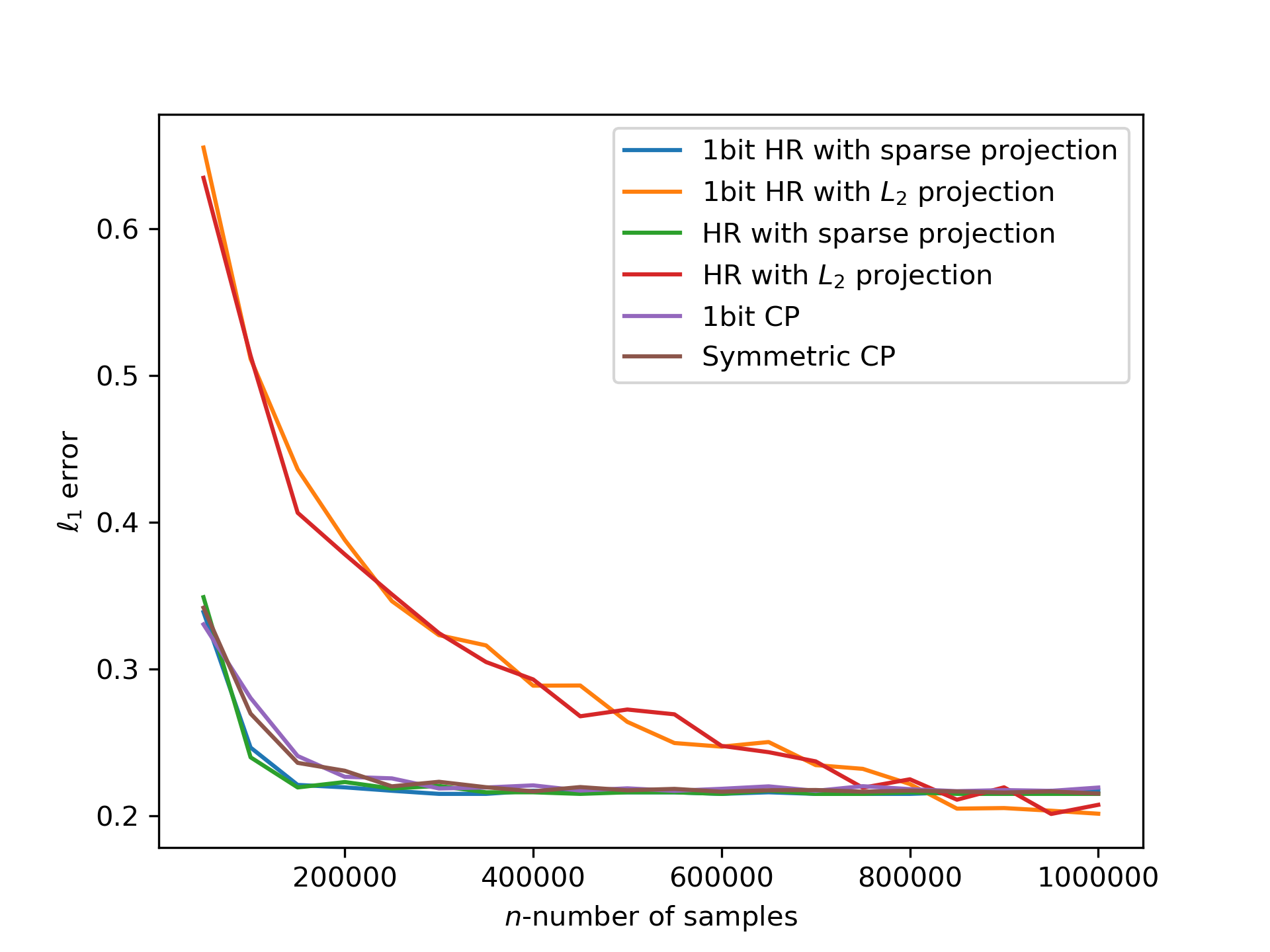}
		\caption{$\mathsf{Geo}(0.8)$}
		\label{fig:sfig4}
	\end{subfigure}

	\caption{$\ell_1$-error for $k = 10000, m=500, \varepsilon=1$. Sparse projection means $L_2$ projection onto simplex with sparsity constraint.}
	
	\label{fig:experimental_results1}
\end{figure*}
\begin{figure*}[!ht]
	\centering
	\begin{subfigure}{0.24\textwidth}
		\centering
		\includegraphics[width=\linewidth]{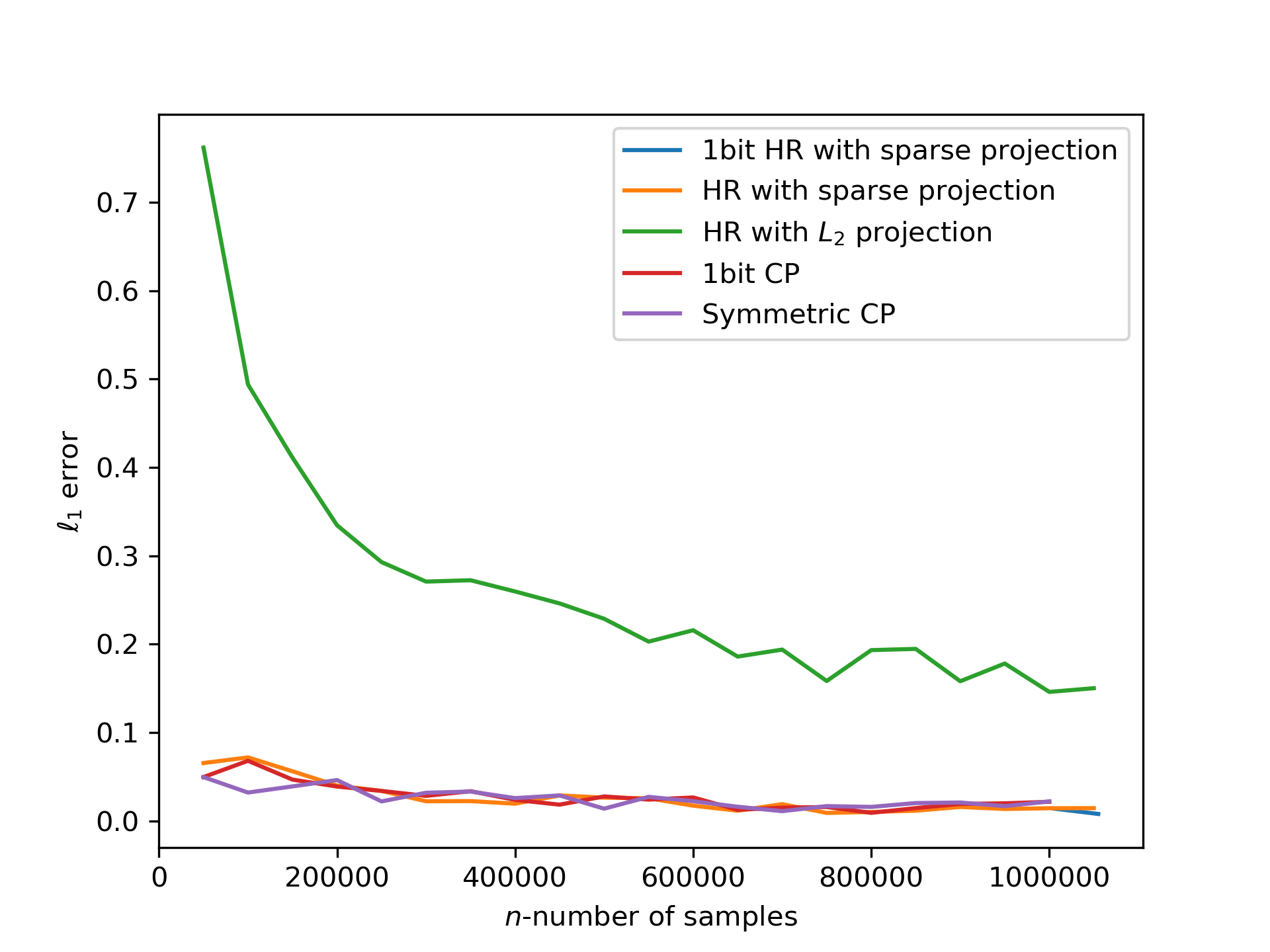}
		\caption{$\mathsf{Unif}(10)$}
		\label{fig:sfig9}
	\end{subfigure}
	\begin{subfigure}{.24\textwidth}
		\centering
		\includegraphics[width=\linewidth]{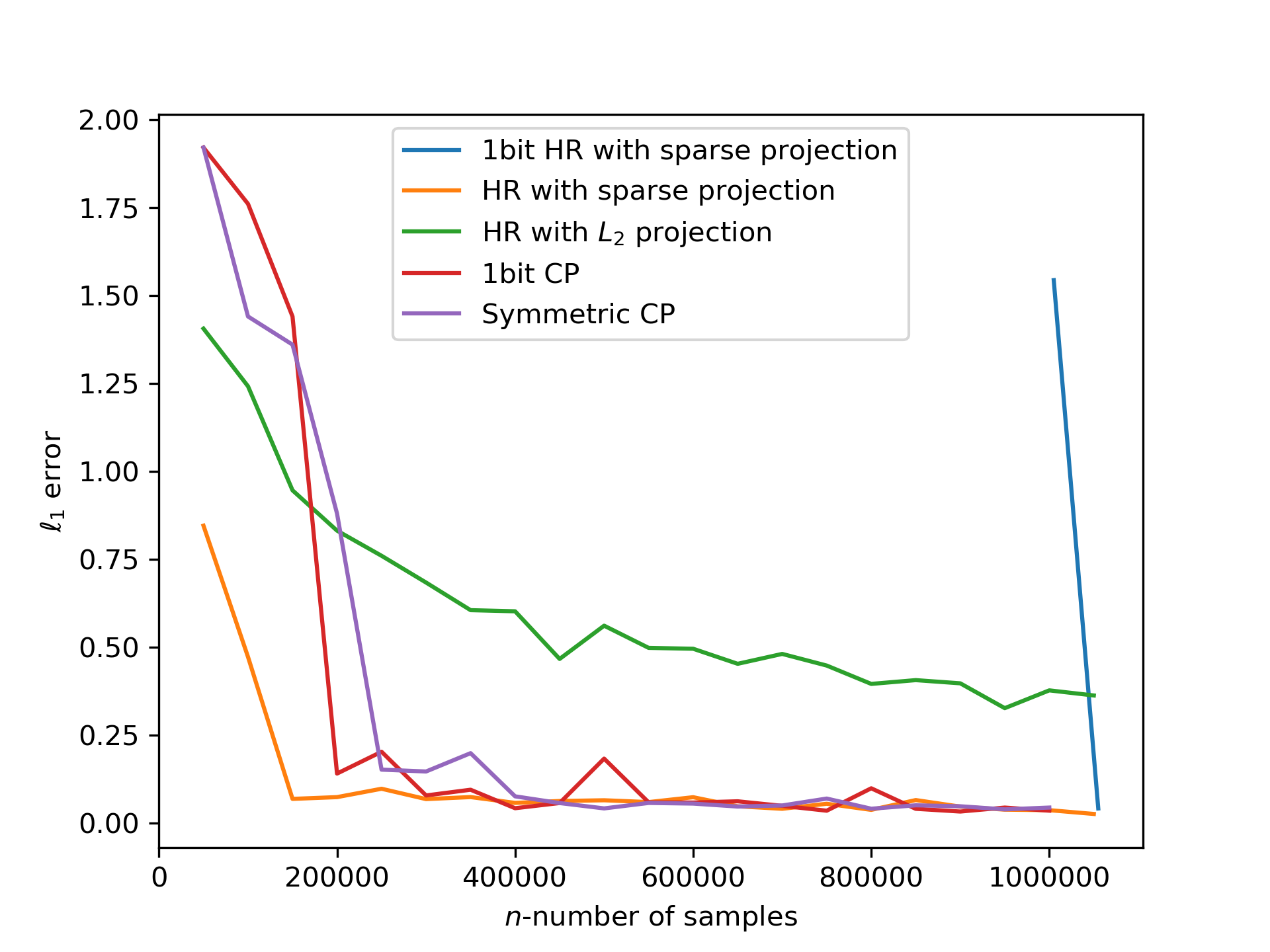}
		\caption{$\mathsf{Unif}(25)$}
		\label{fig:sfig10}
	\end{subfigure}
	\begin{subfigure}{.24\textwidth}
		\centering
		\includegraphics[width=\linewidth]{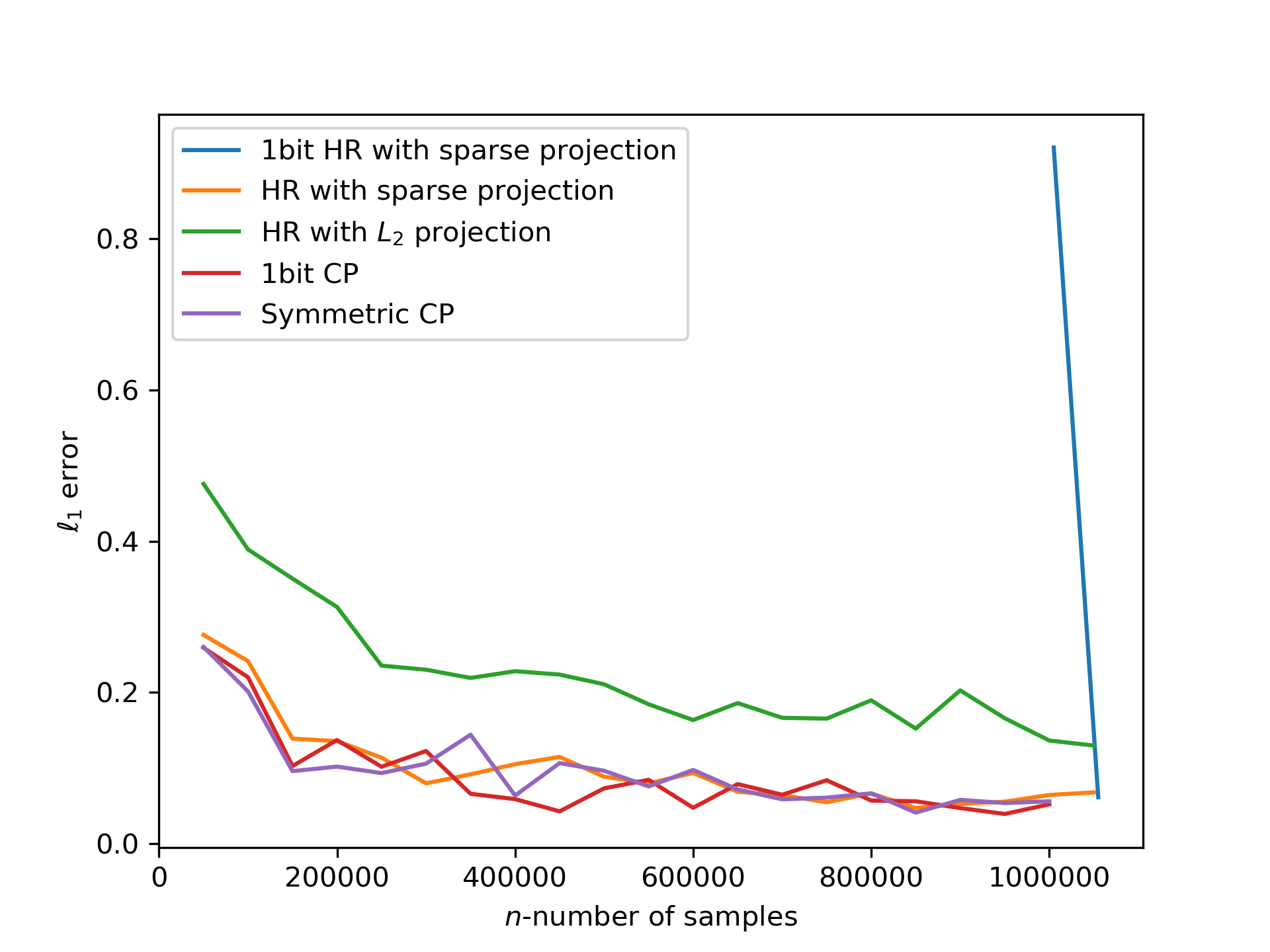}
		\caption{$\mathsf{Geo}(0.6)$}
		\label{fig:sfig11}
	\end{subfigure}
	\begin{subfigure}{.24\textwidth}
		\centering
		\includegraphics[width=\linewidth]{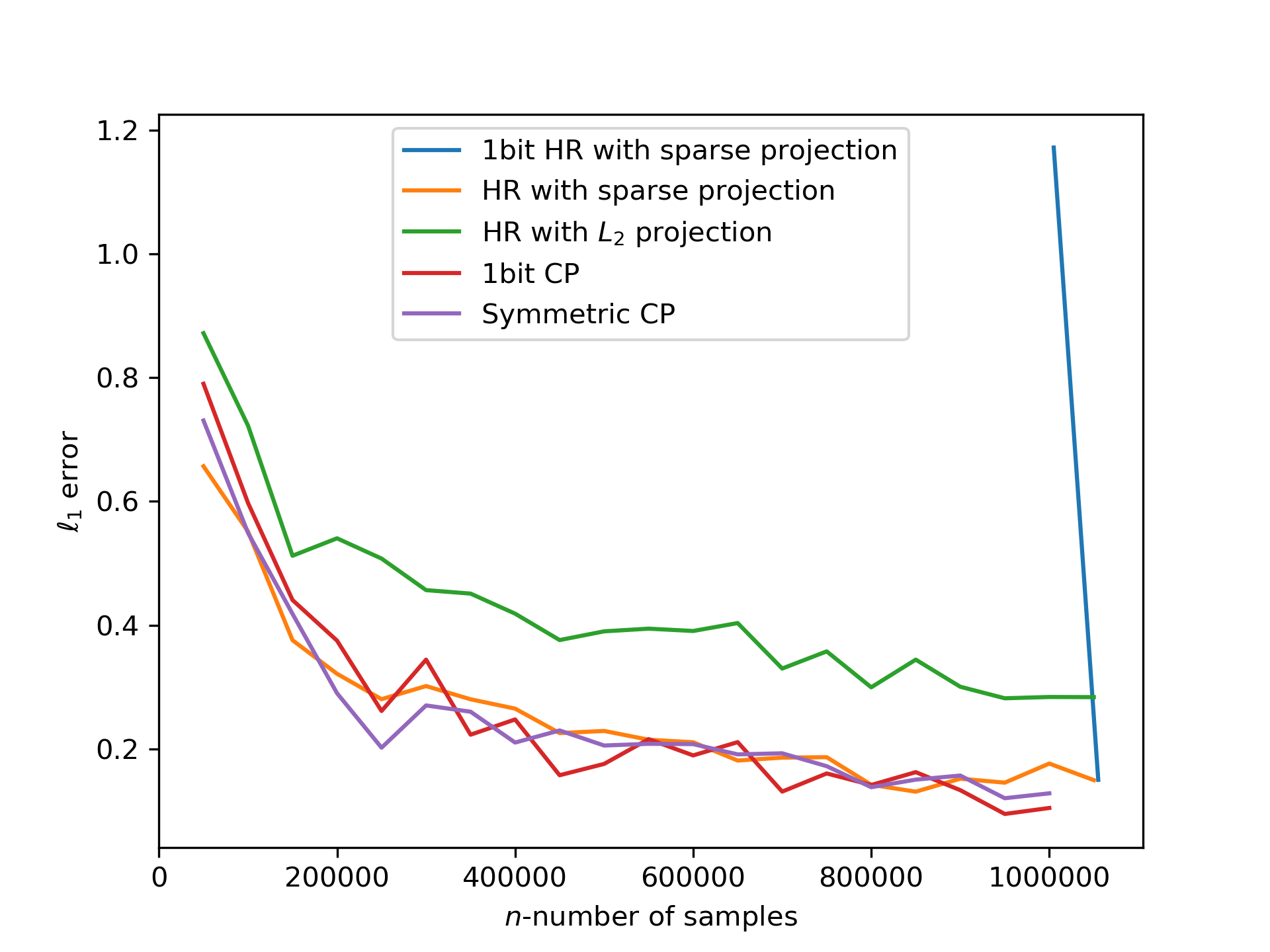}
		\caption{$\mathsf{Geo}(0.8)$}
		\label{fig:sfig12}
	\end{subfigure}
	
	\caption{$\ell_1$-error for $k = 1000000, m=500, \eps = 1$. For one bit HR, since the method requires the number of sample larger than $k$, thus the estimation result of one-bit HR starts from $n = 1000000$ in the figure.}
	\label{fig:experimental_results2}
	
\end{figure*}
The parameter $\beta$ is from Lemma~\ref{lemma:rcp_event_e}. Please refer to the appendix for more discussion. 
%If we set $\beta = \sqrt{\frac{2L\log (L/\delta)}{s}}$, the error probability is less than $\delta$. When $L\log (L/\delta) = O(s)$, then $\beta = O(1)$. In this case, with probability at least $1 - \delta$, the sample complexity for $\ell_2$ error is $O(\frac{s\log({k}/{s})}{L\alpha^2})$ and for $\ell_1$ error is $O(\frac{s^2\log({k}/{s})}{L\alpha^2})$. When $s\ll L\log (L/\delta)$ and $L\le s$,  $\beta=O(\log (s/\delta))$ and thus the sample complexity blows up by at most a logarithmic factor. 
%For general distribution under medium privacy regime, the sample complexity for $\ell_1$ error is at least $\Omega(\frac{k^2}{L\alpha^2})$ \cite{chen2020breaking}, which implies a lower bound of $\Omega(\frac{s^2}{L\alpha^2})$ for $s$-sparse distributions. Thus, our sample complexity is optimal up to logarithmic factors.
% \input{experiments}
\section{Experiments}
We conduct experiments comparing our method with HR~\cite{acharya2018hadamard} and its one-bit version equipped with sparse projection. To implement our recovery process, we use the orthogonal matching pursuit (OMP) algorithm \cite{tropp2004greed}. We test the performances on two types of (approximately) sparse distributions: 1) geometric distributions $\mathsf{Geo}(\lambda)$ with $p(i) \propto (1-\lambda)^{i}\lambda$; and 2) sparse uniform distributions  $\mathsf{Unif}(s)$ where $|\mbox{supp}(p)|= s$ and $p(i) = \frac{1}{s}$ for $i \in \mbox{supp}(p)$. 

 In our experiments, the dimensionality of the unknown distribution is $k \in \{10000, 1000000\}$, and the value of $m$ in our method is set to $500$. The default value of the privacy parameter is $\eps=1$. 
 Here we provide the results of different algorithms on  $\mathsf{Geo}(0.8)$, $\mathsf{Geo}(0.6)$, $\mathsf{Unif}(10)$ and $\mathsf{Unif}(25)$. 
 
We record the estimation errors of different methods with varying number of samples $n \in \{50000, 100000, \cdots, 1000000\}$. Note that geometric distributions are not strictly sparse. For approximately sparse distributions, the sparsity parameter $s$ is chosen such that the distributions are roughly $(s, 0.1)$-sparse in our experiment. We assume that the value of $s$ is provided to the recovery algorithm. We simulate $10$ runs and report the average $\ell_1$ errors. The results are shown in Figure \ref{fig:experimental_results1} and Figure \ref{fig:experimental_results2}.

It can be seen from the numerical results that the performances of our compressive privatization approach are significantly better than the previous worst-case sample optimal methods like HR, which is aligned with our theoretical bounds. For small $k$, which is much smaller than sample size $n$, e.g. $k = 10000$, the one-bit HR with sparse projection is well-defined and the performance compared to our method is almost the same; this is not surprising as both method has the same (theoretical) sample complexity. Note, HR with sparse projection is better than with non-sparse projection. On the other hand, when $k$ is much larger e.g. $k = 1000000$, one-bit HR is not well-defined when the number of samples is less than $k$. In this case, we append zeros in the groups where there is no samples. However, the accuracy is much worse than our methods (see Figure \ref{fig:experimental_results2}). We note that HR with sparse projection still performs well in this case, but each user incurs $\log k$ bits of communication; our one-bit CP only needs one bit to achieve the same accuracy. For our symmetric CP method, the communication cost, which is $\log s+\log\log\frac{k}{s}$, is also lower than HR. 

\section{Conclusion}
In this paper, we study sparse distribution estimation in the local differential privacy model. We propose a compressive sensing based method, which overcome the limitations of the projection based method in \cite{acharya2021estimating}. For high privacy regime, we provide asymmetric and symmetric schemes, both of which achieves optimal sample and communication complexity. We also extend compressive privatization to medium privacy regime, and obtain near-optimal sample complexity for any privacy and communication constraints.
\begin{acks}
	This work is supported by National Natural Science Foundation of China Grant No. 61802069, Shanghai Science and Technology Commission Grant No. 17JC1420200, and Science and Technology Commission of Shanghai Municipality Project Grant No. 19511120700.
\end{acks}

\bibliographystyle{ACM-Reference-Format}
\bibliography{main}

\newpage
\appendix
\section{Missing proof from section \ref{sec:symmetric}}
\subsection{Proof of Lemma \ref{lemma:scp_privacy}}
\begin{proof}
	Observe that for any $x_1, x_2 \in [k]$, we have
	\begin{align*}
	    \max_{y\in[m]} \frac{Q(y|x_1)}{Q(y|x_2)} \leq \frac{n_{x_2}e^\varepsilon + m - n_{x_2}}{n_{x_1}e^\varepsilon + m - n_{x_1}} e^\varepsilon.
	\end{align*}
	By assumption, 
	\begin{align*}
	\frac{n_{x_2}e^\varepsilon + m - n_{x_2}}{n_{x_1}e^\varepsilon + m - n_{x_1}} &\leq \frac{((1 + \beta)\frac{m}{2})e^\varepsilon + m - ((1 + \beta)\frac{m}{2})}{((1 -  \beta)\frac{m}{2})e^\varepsilon + m - ((1 - \beta)\frac{m}{2})} \\&= \frac{1 + \beta\cdot \frac{e^\varepsilon -1}{e^\varepsilon + 1}}{1 - \beta\cdot \frac{e^\varepsilon -1}{e^\varepsilon + 1}} \leq \frac{1 + \beta/2}{1- \beta/2} \leq 1 + 2\beta.
	\end{align*}
	where the second inequality is from $\frac{e^\varepsilon - 1}{e^\varepsilon + 1} \leq \frac{1}{2}$ for $\varepsilon \in (0, 1)$ and the last inequality is from $0\leq \beta \leq 1$. It follows that 
	\begin{align*}
	\max_{x_1, x_2, y\in[k]} \frac{Q(y|x_1)}{Q(y|x_2)} &\leq \max_{x_1, x_2 \in [k]}\frac{n_{x_2}e^\varepsilon + m - n_{x_2}}{n_{x_1}e^\varepsilon + m - n_{x_1}} e^\varepsilon \\&\leq (1 + 2\beta)e^{\varepsilon} = e^{\varepsilon+\ln(1+2\beta)} \le e^{\eps+2\beta}.
	\end{align*}
	The last inequality is from $\ln (1+x) \le x$ for $x>-1$.
\end{proof}

\subsection{Proof of Lemma \ref{lemma:scp_noise_1}}
\begin{proof}
	By definition
 \begin{align*}
    \|e_1\|_2&=\norm{\frac{e^\varepsilon + 1}{\sqrt{m}(e^\varepsilon - 1)}J(D' - I) p}_{2} \\ 
&= \frac{ e^\varepsilon + 1}{\sqrt{m}(e^\varepsilon -1)}\sqrt{((D'- I)p)^T\cdot J^TJ\cdot( (D'-I)p)} \\
&\le \frac{ e^\varepsilon + 1}{e^\varepsilon -1}\sum_{i}|d_i' - 1| p_i,
 \end{align*}
 where $d_i' = \frac{m(e^\varepsilon + 1)/2}{n_i e^\varepsilon + m - n_i}$. By the assumption $n_i \in  (1 \pm \beta) \frac{m}{2}$ for all $i\in [k]$, we have 
 \begin{align*}
     |d_i'-1| &= \frac{|m/2-n_i|(e^{\eps}-1)}{n_i e^\varepsilon + m - n_i} \leq \frac{\beta m (e^{\eps} - 1)/2}{n_i e^\varepsilon + m - n_i}  \\&\leq \frac{\beta m (e^{\eps} - 1)/2}{(1-\beta)\frac{m}{2} e^\varepsilon + \frac{m}{2} -\frac{\beta m}{2}} = \frac{\beta(e^{\eps} -1)}{(1-\beta)(e^{\eps} +1)}.
 \end{align*}
 Thus, $\|e_1\|_2 \le \frac{\beta}{1-\beta}$, which completes the proof.
\end{proof}
\subsection{Proof of Lemma \ref{lemma:scp_noise_2}}
\begin{proof}
Let $Y_1, Y_2, \cdots, Y_n$ be the privatized samples received by the server.  We have, for each $i\in[m]$, $ \hat{q}_i = \sum_{j = 1}^{n} \frac{\mathbb{I}(Y_j = i)}{n}$,
where $\mathbb{I}$ is the indicator function.Thus $\mathbb{E}[\hat{q}_i] = q_i$ and $\operatorname{Var}[q_i] = \frac{q_i (1-q_i)}{n} \le \frac{q_i }{n} $. It follows that
\begin{align*}
   \mathbb{E}\left[\norm{\hat{q} - q}_2\right] &\leq \sqrt{\mathbb{E}\left[\norm{\hat{q}-q}^2\right]} = \sqrt{\sum_i \mbox{Var}(\hat{q}_i)}   \leq \sqrt{\frac{1}{n}}
\end{align*}
where the first inequality is from Jensen's inequality. Multiplying $\frac{\sqrt{m}e^\varepsilon + 1}{e^\varepsilon -1}$ on both sides of the inequality will conclude the proof.
\end{proof}
\subsection{Proof of Theorem \ref{theorem:scp_reconstruction_error}}
\begin{proof} 
Let $p' = D'p$. By definition of $\hat{p}'$, we have
\begin{align*}
    \|p - \hat{p}'\|_2 = \|D'^{-1}p - D'^{-1}f\|_2 \leq \max_{i}\frac{1}{d_i'} \cdot \norm{p' - f}_2
\end{align*}
where $d_i' = \frac{m(e^\varepsilon +1)/2}{n_i e^\varepsilon + m - n_i}$ and $f$ is the output of the recovery algorithm (step 7 in Algorithm \ref{algorithm:scp}). Since $\forall i \in [k], n_i \leq (1+\beta)\cdot \frac{m}{2}$ for $0\leq \beta\leq 1$, then we have 
\begin{align*}
    \norm{p - \hat{p}'}_2 \leq \max_{i}\frac{1}{d_i'} \cdot \norm{p' - f}_2 &\leq (1 + \beta \cdot \frac{e^\varepsilon - 1}{e^\varepsilon + 1}) \norm{p' - f}_2
    \\&\leq (1 + \frac{\beta}{2}) \norm{p' - f}_2. 
\end{align*}
By Lemma \ref{lemma:cs_recovery},  we have 
\begin{align*}
    \norm{p' - f}_2 &\leq \frac{C}{\sqrt{s}}\|p'-[p']_s\|_1 +  D\norm{e_1 + e_2}_2 \\&\leq \max_{i}\frac{1}{d_i'}\cdot \frac{C}{\sqrt{s}}\|p-[p]_s\|_1 + D\left(\norm{e_1}_2 + \norm{e_2}_2\right).
\end{align*} 
Note $\max_{i}\frac{1}{d_i'} \le (1 + \beta/2)$ as $n_i \leq (1+\beta)\cdot \frac{m}{2}$ for all $i$. By Lemma \ref{lemma:scp_noise_1}, \ref{lemma:scp_noise_2}, we get
\begin{align*}
    \E[\norm{p - \hat{p}'}_2] &\leq \left(1 + \frac{\beta}{2}\right)\left(2D\beta+ \frac{D(e^\varepsilon + 1)}{e^\varepsilon -1}\sqrt{\frac{m}{n}} \right) \\&+\left(1+\frac{\beta}{2}\right)^2\left( \frac{C}{\sqrt{s}}\|p-[p]_s\|_1\right),   
\end{align*}
which proves the theorem.
\end{proof}

\section{Missing proof from section \ref{sec:median}}

\subsection{Proof of Lemma \ref{lemma:rcp_event_e}}
\begin{proof}
By symmetry, we only consider the sparsity of one specific block, say the first one. Let $k_1 = \frac{k}{L}$ be the size of a block. Let $s_1$ denote the sparsity, i.e.\ the number of non-zero entries, of the first block. Then, we have
\begin{align*}
    s_1 = \sum_{j = 1}^{k_1} \mathbf{1}\{p'_j \neq 0\}
\end{align*}
where $\mathbf{1}\{p'_j \neq 0\}$ is an indicator to describe whether the $j$-th position of $p'$ is nonzero. By direct calculation, we can get that $\E[\mathbf{1}\{p'_j \neq 0\}] ={\tbinom{k -1}{s - 1}}/{\tbinom{k}{s}} = \frac{s}{k}$. Thus, $\E[s_1] = \frac{sk_1}{k} =\frac{s}{L}$. Since $p'$ is a random permutation of $p$, $\mathbf{1}\{p'_1 \neq 0\}, \cdots, \mathbf{1}\{p'_{k_1} \neq 0\}$ are negatively associated (NA) \cite{wajc2017negative}. By Chernoff-Hoeffding bounds for NA variables \cite{wajc2017negative, dubhashi1996balls}, we can get that 
\begin{align}
\label{eq:chernoff_na}
    \operatorname{Pr}\left[s_1 \geq (1 + \beta)\E[s_1] \right] &\leq \left(\frac{e^\beta}{(1 + \beta)^{(1 + \beta)}}\right)^{\E[s_1]} \notag\\&= e^{\E[s_1]\left(\beta - (1 + \beta)\ln{(1 + \beta)}\right)}
\end{align}
It can be easily verified that 
\begin{align}
\label{eq:beta_ineq}
    \beta - (1+\beta)\ln{(1 + \beta)} \leq \left\{\begin{array}{cc}
      -\frac{\beta^2}{4}   &  \beta <= 4, \\
       -\frac{\beta}{4}  & \mbox{otherwise} 
    \end{array}\right. \leq -\frac{1}{4} \min\{\beta^2, \beta\}
\end{align}

Combining \eqref{eq:chernoff_na}, \eqref{eq:beta_ineq} and $\E[s_1] = \frac{s}{L}$ yields that
\begin{align}
     \operatorname{Pr}\left[s_1 \geq (1 + \beta)\frac{s}{L} \right] \leq e^{-\frac{\min\{\beta^2, \beta\}s}{4L}}
\end{align}
The proof is then completed by applying union bound over all $L$ sections. 
\end{proof}

\subsection{Proof of Lemma \ref{lemma:rcp_q_j_variance}}
\begin{proof}
For any $j\in [m]$, we have
\begin{align}
\label{eq:total_expectation}
    \mathbb{E}[\hat{Q}_j(X_i')] = \mathbb{E}_p[\mathbb{E}_{\eps}[\hat{Q}_j(x)|X_i' = x]]
\end{align}
When $X_i'$ is fixed to be $x \in \mathcal{Q}$ where $\mathcal{Q} := \{\pm e_1, \cdots, \pm e_L\}$ and we only consider the randomness from the privatization, we have
\begin{align}
\label{eq:privatization_randomness}
    \mathbb{E}_{\eps}\left[\hat{Q}_j(x)\right] &= \frac{e^\eps}{e^\eps + 2L - 1} Q_j(x)  + \sum_{Q' \in \mathcal{Q}\backslash\{Q_j(X_i')\}} \frac{1}{e^\eps + 2L - 1}\notag \\
    &= \frac{(e^\eps - 1) }{e^\eps + 2L - 1}\cdot Q_j(x)
\end{align}
By the definition of $Q_j$, we can get 
\begin{align}
\label{eq:p_randomness}
    \mathbb{E}\left[Q_{j}\left(X_{i}'\right)\right]=\left[\begin{array}{c}
\left(A_2\right)_{j} \cdot {p'}^{(1)} \\
\left(A_2\right)_{j} \cdot {p'}^{(2)} \\
\vdots \\
\left(A_2\right)_{j} \cdot {p'}^{\left(L\right)}
\end{array}\right]
\end{align}
Combining \eqref{eq:total_expectation}, \eqref{eq:privatization_randomness} and \eqref{eq:p_randomness} yields that
\begin{align}
\label{eq:Q_j_hat}
    \mathbb{E}[\hat{Q}_{j}(X_i')] = \frac{(e^\eps - 1) }{e^\eps + 2L - 1}  \left[\begin{array}{c}
\left(A_2\right)_{j} \cdot {p'}^{(1)} \\
\left(A_2\right)_{j} \cdot {p'}^{(2)} \\
\vdots \\
\left(A_2\right)_{j} \cdot {p'}^{\left(L\right)}
\end{array}\right]
\end{align}
Recall that $q = A\cdot p'$ and $A = A_1 \otimes A_2$. For $j' \equiv j$ (mod $m$) and $j' = j + (t - 1)m $, by \eqref{eq:Q_j_hat}, we have
\begin{align*}
    \mathbb{E}[\hat{q}_{j'}] &= \frac{m}{n} \cdot \frac{e^\eps + 2L - 1}{e^\eps - 1}\sum_{i \in S_j} (A_1)_t \cdot  \mathbb{E}[\hat{Q}_j(X_i')] \\
    &= (A_1)_t \cdot \left[(A_2)_{j}\cdot p^{'(1)}, \cdots, (A_2)_{j}\cdot p^{'(L)}\right] = q_{j'}
\end{align*}
where the last equality is from the definition of kronecker product. Hence, $\hat{q}_{j'}$ is an unbiased estimator for $q_{j'}$. Thus, 
\begin{align*}
    \mathbb{E}[(q_{j'} - \hat{q}_{j'})^2] = \var(q_{j'}) &= \frac{m}{n}\left(\frac{e^\eps + 2L - 1}{e^\eps - 1}\right)^2 \var\left((A_1)_t \cdot  \hat{Q}_j(X_i')\right) \\
    &\leq \frac{m}{n} \left(\frac{e^\eps + 2L - 1}{e^\eps - 1}\right)^2
\end{align*}
where the inequality is from that $(A_1)_t \cdot \hat{Q}_j(X_i')$ only takes value in 
$\{+1, -1\}$. The proof is completed.
\end{proof}

\subsection{Proof of Theorem \ref{thm: rcp_medium_result}}

\begin{proof}
From the analysis of estimation error in section \ref{sec:median}, we know that
\begin{align*}
\E[\|p - \hat{p}\|_2] \leq \E[\|p' - \hat{p}'\|_2] \leq \frac{1}{\sqrt{mL}} \E[\|\hat{q} - q\|_2] 
\end{align*}
By Lemma \ref{lemma:rcp_q_j_variance}, we can get
\begin{align*}
    \E[\|p - \hat{p}\|_2] &\leq \E[\norm{p' - \hat{p}'}_2] \leq \frac{1}{\sqrt{mL}} \E[\|\hat{q} - q\|_2] \\&\overset{(a)}{\leq} \sqrt{\frac{1}{mL}\E\left[\norm{\hat{q} - q}_2^2\right]} \\
    &= \sqrt{\frac{1}{mL}\sum_{j' \in [mL]}\E\left[(\hat{q}_{j'} - q_{j'})^2\right]} \\
    &\overset{(b)}{\leq}\sqrt{\frac{m}{n}}\left(\frac{e^\eps + 2L - 1}{e^\eps - 1}\right) \overset{(c)}{\leq} \sqrt{\frac{m}{n}}\left(\frac{3e^\eps - 1}{e^\eps - 1}\right)
\end{align*}
where $(a)$ is from Jensen's inequality and $(b)$ is from Lemma \ref{lemma:rcp_q_j_variance} and $(c)$ is from $L = \min\{e^\eps, 2^b\}$.
\begin{enumerate}
    \item $\eps = O(1)$. In this case, we can set $L = 1$ directly and the communication is 
    $1$ bit now. The event $\mathcal{E}$ then holds with probability $1$, hence $m = s\log(k / s)$. Since $\eps = O(1)$, $\frac{3e^\eps - 1}{e^\eps - 1} = O(\frac{1}{\eps})$. Thus $\E[\norm{p - \hat{p}}_2] = O\left(\sqrt{\frac{s\log(k / s)}{n\eps^2}}\right)$, which is the same error bound as that in one-bit CP for high privacy. Note that $A_1$ is $1$ now, $A = A_1 \otimes A_2 = A_2 \in \mathbb{R}^{m \times k}$ is a Rademacher matrix. Therefore, for high privacy, if we set $L = 1$, our scheme is exactly one-bit CP.
    \item ${\eps} = \omega(1)$. We mainly  consider medium privacy case, where $\eps = \omega(1)$ and  $e^{\eps} \leq s$. In this case, $\frac{3e^\eps - 1}{e^\eps - 1} = O(1)$. Hence, we have that $\E[\norm{p - \hat{p}}_2] = O\left(\sqrt{\frac{(1 + \beta)s\log(k / (1+\beta)s)}{nL}}\right)$ (note $m=O\left(\frac{(1 + \beta)s\log(k / (1+\beta)s)}{L} \right)$). When $n \geq c \cdot \frac{(1 + \beta)s\log(k / (1+\beta)s)}{L\alpha^2}$ for some large enough constant $c$, the expected $\ell_2$ error is at most $\alpha$. For $\ell_1$ error, we have
    \begin{align*}
        \E[\norm{p - \hat{p}}_1] \leq 2\sqrt{2s}\E[\norm{p' - \hat{p}'}_2]
    \end{align*}
    Thus to achieve an $\ell_1$ error of $\alpha$, it's sufficient to get an estimate with $\alpha' = \alpha / 2\sqrt{2s}$ for $\E[\norm{p' - \hat{p}'}_2]$. The sample complexity is $O\left(\frac{(1 + \beta)s^2\log(k / (1+\beta)s)}{L\alpha^2}\right)$. Since the event $\mathcal{E}$ holds with probability at least $1 - Le^{-\frac{\min\{\beta^2, \beta\}s}{4L}}$ and the RIP condition holds with probability at least $1 - e^{-m}$, by union bound, we can achieve the sample complexity above with probability $1 - Le^{-\frac{\min\{\beta^2, \beta\}s}{4L}} - e^{-m}$, where $m = \frac{(1 + \beta)s\log(k / (1 + \beta)s)}{L}$. When $\min\{\beta^2, \beta\} = \frac{4L\log(L /\delta)}{s}$, the error probability from $\mathcal{E}$ is less than $\delta$. If $L\log(L/ \delta) = O(s)$, then $\min\{\beta^2, \beta\} = O(1)$ which means $\beta = O(1)$. In this case, with probability $1 - \delta - e^{-m}$, the sample complexity for $\ell_2$ error is $O\left(\frac{s\log(k/s)}{L\alpha^2}\right)$ and for $\ell_1$ error is $O\left(\frac{s^2\log(k/s)}{L \alpha^2}\right)$. When $s \ll L\log(L/\delta)$ and $L \leq s$, $\min\{\beta^2, \beta\} = O(\log(s /\delta))$ which means $\beta = O(\log(s/\delta))$. For general distribution under medium privacy regime, the sample complexity for $\ell_1$ error is at least $\Omega(\frac{k^2}{L\alpha^2})$ \cite{chen2020breaking}, which implies a lower bound of $\Omega(\frac{s^2}{L\alpha^2})$ for $s$-sparse distributions. Thus the sample complexity blows up by at most a logarithmic factor. 
\end{enumerate}
\end{proof}

\end{document}